\newcommand{\myF}[1]{ F(#1 | n-1) }
\newcommand{\myPsi}[1]{ \psi(n-1,#1,\varepsilon) }
\newcommand{\myV}[1]{ V(#1 | n-1) }
\newcommand{\myFs}[1]{ F^{\varepsilon}(#1 | n-1) }
\newcommand{\myVs}[1]{ V^{\varepsilon}(#1| n-1) }
\newtheorem{prop}{Proposition}
\newtheorem{lemma}{Lemma}
\newtheorem{cor}{Corollary}
\renewcommand\fnum@figure[1]{\textbf{\figurename~\thefigure.~}}
\renewcommand\fnum@table[1]{\textbf{\tablename~\thetable.~}}
\begin{document}

\title{Behavioral Mistakes Support Cooperation \\ in an N-Person Repeated Public Goods Game}
\author{Jung-Kyoo Choi$^1$\thanks{jkchoi@knu.ac.kr} and  Jun Sok Huhh$^2$\\\\
\normalsize
$^1$ School of Economics and Trade, Kyungpook National University, Daegu 41566,. Korea \\
\normalsize
$^2$ NCSoft, Gyeonggi-do, 13494, Korea \\
}
\date{}
\maketitle

\begin{abstract}
This study investigates the effect of behavioral mistakes on the evolutionary stability of the cooperative equilibrium in a repeated public goods game. Many studies show that behavioral mistakes have detrimental effects on cooperation because they reduce the expected length of mutual cooperation by triggering the conditional retaliation of the cooperators. However, this study shows that behavioral mistakes could have positive effects. Conditional cooperative strategies are either neutrally stable or are unstable in a mistake-free environment, but we show that behavioral mistakes can make \textit{all} of the conditional cooperative strategies evolutionarily stable. We show that behavioral mistakes stabilize the cooperative equilibrium based on the most intolerant cooperative strategy by eliminating the behavioral indistinguishability between conditional cooperators in the cooperative equilibrium. We also show that mistakes make the tolerant conditional cooperative strategies evolutionarily stable by preventing the defectors from accumulating the free-rider's advantages. Lastly, we show that the behavioral mistakes could serve as a criterion for the equilibrium selection among cooperative equilibria. 
 
\vspace{0.25in} \noindent \textbf{Keywords}: Evolutionary stability of
cooperative strategy; Repeated public goods game; Behavioral
mistakes; Equilibrium selection\\
\vspace{0.1in}\noindent\textbf{JEL Classification} to be added
\end{abstract}

\section{Introduction}

Previous research has shown that in the context of a social dilemma, repeated interactions induce agents to consider the possibility of retaliation and thus to support a cooperative equilibrium \citep{Fudenberg_Maskin:1986,Fudenberg_Maskin:1990,Axelrod_Hamilton:1991,Taylor:1986}. This result is known as the folk theorem in economics or the reciprocity hypothesis in evolutionary biology. However, there are some criticisms of the repeated game approach. One of the issues is that repetition supports too many Nash equilibria, which raises the question of equilibrium selection \citep{Fudenberg_Maskin:1990,Binmore_Samuelson:1992,Axelrod:1997, Choi:2007,Sethi_Somanathan:2003}. In this regard, many efforts have been made to find sharper criteria for selecting efficient equilibria \citep{Binmore_Samuelson:1992,Fudenberg_Maskin:1990,Sethi_Somanathan:2003}.


Setting aside the problems of multiple equilibria and equilibrium selection, other problems remain, particularly from an evolutionary perspective. Previous studies show that cooperators are unlikely to evolve when they are rare in a population, especially when the group size is large \citep{Gintis:2006,Boyd_Richerson:1988}, and that even the efficient equilibria that are supported by game repetition do not satisfy dynamic stability \citep{Young_Foster:1991,Gintis:2006,Boyd_Lorberbaum:1987,Farrell_Ware:1989,YaoETAL:1996}. 

Among these two issues of inaccessibility and instability, the latter issue is the primary issue that we address in this paper. We will reconfirm the conclusion from previous studies that cooperation cannot be sustained in the long run, and we will investigate the role of behavioral mistakes in stabilizing the cooperative equilibrium. In this paper, we analyze a model of a repeated public goods game in an \textit{error-prone} environment where the individuals make unintended behavioral mistakes. The starting point of this paper is closely related to several previous studies. \citet{Joshi:1987} and \citet{Boyd_Richerson:1988} studied the effectiveness of repetition in a social dilemma situation where more than two agents are involved. Both studies assumed a repeated $n$-person public goods game structure in which conditional retaliation is triggered whenever the number of cooperators is less than the level of tolerance displayed by each individual. These studies introduced conditional cooperative strategies that differ in the number of defectors that are tolerated before retaliation is triggered. In this setting they showed that the only conditional cooperative strategy that is stable is the one that does not tolerate any defections, and the other conditional cooperative strategies that tolerate some defections are not stable at all. 

In this regard, many studies note that the conditional cooperative strategy that does not tolerate any defections is only neutrally stable in that it can remove defectors from its population when the game is repeated with sufficiently high probability, but it allows other cooperative strategies with higher tolerance levels to remain in its population. It is because, in the absence of defectors in the population, all of the conditional cooperative strategies, regardless of their tolerance levels, are behaviorally indistinguishable and, as a result, receive the same payoffs. Mutant cooperative strategies with higher tolerance levels enter the population and can cause a random drift as long as there are no defectors in the population. This process will ultimately result in a population state having too many tolerant conditional cooperators, which makes the population vulnerable to the invasion of defectors into the population. This is drift problem well known in evolutionary biology. In other words, a population that is composed of individuals using the conditional cooperative strategy with the least tolerance is subject to drift by which the dynamic path is ultimately led to a situation where the population is occupied only by defectors. Cooperation cannot be sustained in the long run and this dynamic instability is due to the fact that none of the cooperative strategies are evolutionarily stable \citep{Samuelson:2002,Choi:2007}. 

Here, we introduce a crucial question that remains unaddressed. Are the previous conclusions modified if the individual agents are allowed to make mistakes? There have been some studies that examine the effect of mistakes on the cooperative equilibrium in a repeated game. Most of these studies, however, only partially address the effect of behavioral errors, primarily in the context of their detrimental effects on cooperation \citep{Boyd:1989,Bendor_Mookherjee:1987}. 

In this paper, we will reframe the model in an error-prone environment and show how introducing behavioral mistakes significantly modifies the conclusions that are drawn from an error-free assumption. As many studies have shown, mistakes have a detrimental effect on cooperation because they trigger retaliation and reduce the expected duration of the game. However, we will show that, in the presence of behavioral mistakes, the conditional cooperative strategy can become evolutionarily stable. In Section \ref{setting}, we set up a model of a public goods game and introduce the dynamic stability issue of cooperative equilibrium. We identify the dynamic property of cooperative equilibrium supported by repetition and show why neutral stability does not guarantee the sustainability of cooperation in the long run. In Section \ref{sec error}, we present a model of a public goods game where the cooperators make mistakes and show that the cooperators' mistakes make \textit{all} of the conditional cooperative strategies evolutionarily stable if the game is repeated with a sufficiently high probability. We also show that the rate of mistakes serves as an equilibrium selection criterion. Lastly, in Section \ref{sec general case} we show that the conclusions drawn in the previous sections still hold in the more general cases.

\section{An $n$-Person Public Goods Game without Errors\label{setting}}

Suppose that groups of $n$ individuals are randomly drawn from the population to interact repeatedly in an $n$-person public goods game\footnote{The model of this section is based on \citet{Boyd_Richerson:1988}.}. Each individual's payoff depends on his or her action and the actions of the $n-1$ other individuals in the group. Let $j$ be the number of cooperators among the other members in the group. Define $F(C|j)$ and $F(D|j)$ as the payoff of the one-shot game for cooperation and defection, respectively, when there are $j$ cooperators among the $n-1$ other individuals. If the player is a cooperator, then the number of cooperators in the group becomes $j + 1$; otherwise, the number of cooperators remains $j$. Therefore, we have  
\begin{align}
\begin{aligned}
F(C|j)&=\frac{b\cdot (j+1)}{n}-c\\
F(D|j)&=\frac{b\cdot j}{n}.
\end{aligned}
\end{align}
%

Assume that this public goods game is repeated in each group with probability $\delta$. For simplicity, assume that individuals do not discount the future. Furthermore, consider a situation with incomplete information: players know how many cooperators in their group were in the previous stage, but they do not know exactly who defected. 

Assume that the individuals only remember the outcome of the last stage. The individuals now decide whether to cooperate in this stage conditional on how many cooperators were in the previous stage. Define individual $i$'s pure strategy set, $S_i = \{T_k| 0 \leq k\leq n\}$ for an integer $k$. The $T_k$ strategy is similar to a tit-for-tat strategy in a two-person prisoner's dilemma and can be expressed as follows: cooperate if $k$ or more of the other $n-1$ individuals cooperate in the group during the previous stage and defect otherwise. We will call a player who uses the $T_k$ strategy a $k$-cooperator. Thus, for example, a person following the $T_{n-1}$ strategy (i.e., an $[n-1]$-cooperator) will cooperate only if every other individual cooperates, a person with the $T_0$ strategy will cooperate unconditionally, and a person with the $T_n$ strategy is an unconditional defector because it is impossible to have $n$ cooperators among $n-1$ the others. The subscript $k$ (if $k < n$), therefore, refers to a player's degree of willingness to retaliate. We call $T_{n-1}$ the hardest (or the least tolerant) strategy because the $[n-1]$-cooperator does not tolerate any defectors in their group. 

Consider a situation where every conditional cooperator has the same $k$, and a small number of $T_n$ players (unconditional defectors) appear in this population. Let the share of mutants in the post-entry population be $\mu$, where $\mu\in (0,1)$.

When the game is repeated with the probability $\delta$, the payoff for $T_k$ and $T_n$ from the repeated game is as follows: 
\begin{align}\label{payoff}
\begin{aligned}
V(T_k|j)&=
\begin{cases}
\frac{1}{1-\delta}F(C|j)&\text{if $j> k$}\\
\frac{1}{1-\delta}F(C|j)&\text{if $j= k$}\\
F(C|j)+\frac{\delta}{1-\delta}F(D|0)&\text{if $j<k$}
\end{cases}\\
V(T_n|j)&=
\begin{cases}
\frac{1}{1-\delta}F(D|j)&\text{if $j> k$}\\
F(D|j)+\frac{\delta}{1-\delta}F(D|0)&\text{if $j=
k$}\\
F(D|j)+\frac{\delta}{1-\delta}F(D|0)&\text{if $j<
k$}.
\end{cases}
\end{aligned}
\end{align}
After playing a repeated public goods game, the players have the chance to update their strategy. Let $p_k$ and $p_n$ be the population frequency of the $k$-cooperators and the unconditional defectors, respectively. We assume that the strategy-updating occurs in the following way: Whether the unconditional defectors can increase their frequency in a population where $p_k = 1-\mu$ and $p_n = \mu$ depends on whether the expected payoff for the $T_n$ strategy is greater than that for the $T_k$ strategy. When $p_k = 1-\mu$, the expected payoffs for the $T_k$ strategy and the $T_n$ strategy are  

\begin{align}
\begin{aligned}
W(T_k|p_k=1-\mu)&=\sum_{j=0}^{k-1}V(T_k|j)m(j, p_k)+V(T_k|k)m(k, p_k)+\sum_{j=k+1}^{n-1}V(T_k|j)m(j, p_k)\\
W(T_n|p_k=1-\mu)&=\sum_{j=0}^{k-1}V(T_n|j)m(j, p_k)+V(T_n|k)m(k,
p_k)+\sum_{j=k+1}^{n-1}V(T_n|j)m(j, p_k),
\end{aligned}
\end{align}
\normalsize
where $m(j,p_k)$ is the probability that an individual finds him/herself in a group in which there are $j$ other individuals following the $T_k$ strategy, which can be written as 
\begin{align*}
m(j, p_k)=\binom{n-1}{j}p_k^{j}(1-p_k)^{n-1-j}.
\end{align*}
In general, when a mutant strategy $T_{k'}$ appears with a share of $\mu$ in a population that is homogeneously composed of individuals following the $T_k$ strategy (where $k'\neq k$), the $T_k$ strategy is neutrally stable if and only if there exists a range of $\mu\in [0,\bar{\mu}]$ that satisfies $W(T_k|p_k = 1-\mu)\geq W(T_{k'}|p_k = 1-\mu)$ and that is evolutionarily stable when the condition is satisfied with the strict inequality. This stability condition is equivalent to $W(T_k|p_k = 1) \geq W(T_{k'}|p_k = 1)$, and $W(T_k|p_{k'} = 1) > W(T_{k'}|p_{k'} = 1)$ if $W(T_k|p_k = 1) = W(T_{k'}|p_k = 1)$  \citep{Weibull:1997}.  

\begin{prop}\label{neut t n-1}
In an n-person public goods game where the game is repeated with probability $\delta$, the $T_{n-1}$  strategy is neutrally stable.
\end{prop}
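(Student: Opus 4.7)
The plan is to apply the Weibull-style characterization of neutral stability that the excerpt has just stated: show that $W(T_{n-1}|p_{n-1}=1)\geq W(T_{k'}|p_{n-1}=1)$ for every mutant $T_{k'}\neq T_{n-1}$, and whenever equality holds, verify the drift condition $W(T_{n-1}|p_{k'}=1)\geq W(T_{k'}|p_{k'}=1)$. I would split the mutant strategies into two classes and handle them separately.

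First I would compute the incumbent payoff. In a monomorphic $T_{n-1}$ population, any random draw of $n$ individuals produces $n-1$ cooperators among the $n-1$ others, so by the $j=k$ case of the payoff formula (\ref{payoff}), $W(T_{n-1}|p_{n-1}=1)=V(T_{n-1}|n-1)=\tfrac{1}{1-\delta}F(C|n-1)=\tfrac{b-c}{1-\delta}$. Next I would treat mutants $T_{k'}$ with $0\leq k'\leq n-2$. A lone $T_{k'}$ mutant placed in a group of $n-1$ incumbents observes $n-1\geq k'$ cooperators every period and therefore cooperates every period; (\ref{payoff}) then gives $V(T_{k'}|n-1)=\tfrac{b-c}{1-\delta}$, matching the incumbent exactly. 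Because the first inequality holds with equality, I would check the drift condition by reversing roles: in a pure $T_{k'}$ population a lone $T_{n-1}$ mutant likewise observes $n-1\geq k'$ cooperators, so $W(T_{n-1}|p_{k'}=1)=W(T_{k'}|p_{k'}=1)=\tfrac{b-c}{1-\delta}$, and the weak drift inequality is satisfied. This step is the formal statement of the behavioral indistinguishability the introduction flags as the origin of drift.

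Finally I would handle the unconditional defector $T_n$. Since a lone $T_n$ sees $j=n-1=k$ against the $T_{n-1}$ incumbents, (\ref{payoff}) yields $V(T_n|n-1)=F(D|n-1)+\tfrac{\delta}{1-\delta}F(D|0)=\tfrac{b(n-1)}{n}$, because after the first-period defection every incumbent observes only $n-2<n-1$ cooperators and retaliates permanently. The inequality $\tfrac{b-c}{1-\delta}\geq \tfrac{b(n-1)}{n}$ then reduces to $\delta\geq 1-\tfrac{n(b-c)}{b(n-1)}$, which holds for $\delta$ sufficiently close to $1$ and is the standard repetition-threshold assumption implicit in the proposition.

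The calculations themselves are routine once (\ref{payoff}) is taken as given, so there is no deep technical obstacle; the only subtlety to state cleanly is that against every $T_{k'}$ with $k'<n-1$ both the pure-incumbent and pure-mutant comparisons yield equality, so what we obtain is genuine \emph{neutral} stability rather than evolutionary stability. This is not a gap in the argument but rather the substantive content of the proposition, and it is exactly the weakness that Section \ref{sec error} will repair by introducing mistakes to break the behavioral indistinguishability.
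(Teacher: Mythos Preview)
Your proposal is correct and follows essentially the same route as the paper: both split into the cases $k'<n-1$ (behavioral indistinguishability yields equality) and $k'=n$ (strict inequality under a sufficiently large $\delta$), with identical payoff computations. The only cosmetic difference is that the paper argues case~(i) via the $\bar\mu$-formulation (equality holds for \emph{all} $\mu$), whereas you use the equivalent Weibull characterization and explicitly verify the weak drift inequality $W(T_{n-1}|p_{k'}=1)\geq W(T_{k'}|p_{k'}=1)$; both formulations are stated just before the proposition, so this is a matter of presentation rather than substance.
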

\begin{proof}
For the $T_{n-1}$  strategy to be neutrally stable, there should exist a $\bar{\mu}$  such that for $\mu\in[0,\bar{\mu}]$, 
\begin{align}
W(T_{n-1}|p_{n-1}=1-\mu)\geq W(T_{k'}|p_{n-1}=1-\mu),\quad \forall
k'\neq n-1
\end{align}
is satisfied. 

(i) For $k'<n-1$. Suppose that a mutant with $T_{k'}$  (where $k'<n-1$) appears in a population where $p_{n-1}=1$. In the absence of a $T_n$  strategy, all of the conditional cooperative strategies (i.e., all of the $T_{k'}$  strategies where $k'<n-1$ and the $T_{n-1}$  strategy) are behaviorally indistinguishable among themselves and receive the same payoffs. In other words, any $T_k$ strategy players (if $k<n$ begin the game by playing $C$ and continue playing $C$  until the game ends because there would be no defection in the group in the absence of $T_n$. Therefore, we have $W(T_{n-1}|p_{n-1}=1-\mu)=W(T_{k'}|p_{n-1}=1-\mu)=(b-c)/(1-\delta)$  for all $k'<n-1$ and all $\mu$.

(ii) For $k'=n$. Suppose that a mutant having $T_n$  appears in a population that is homogeneously composed of the $T_{n-1}$ strategy players.  Where $p_{n-1}=1$ we have $m(n-1, p_{n-1})=m(n-1,1)$. Therefore, the payoff to the $T_{n-1}$ is $W(T_{n-1}|p_{n-1}=1)=\frac{1}{1-\delta}F(C|n-1)=\frac{1}{1-\delta}(b-c)$, and the payoff to the $T_n$  is $W(T_n|p_{n-1}=1)=F(D|n-1)+\frac{\delta}{1-\delta}F(D|0)=\frac{b(n-1)}{n}$. Now we obtain the condition, $\frac{1}{1-\delta}(b-c)>\frac{b(n-1)}{n}$, which can be rearranged to $\delta>\frac{c-(b/n)}{b-(b/n)}$. Because $c<b$, there exists a $\delta<1$  that satisfies the condition. Therefore, when the game is repeated with a sufficiently high $\delta$, we have $W(T_{n-1}|p_{n-1}=1)>W(T_n|p_{n-1}=1)$. 
\end{proof}
Applying the same logic to the stability of the other conditional cooperative strategies, we can easily show that the other softer $T_k$ strategies (where $k<n-1$) are not stable. The $T_k$ strategies with $k<n-1$ tolerate the defector's free riding behavior as long as the number of defectors in the group does not exceed $n-k-1$. As a result, these strategies allow the defectors to invade the $T_k$ population.   
\begin{cor}\label{stability Tk}
In an $n$-person public goods game where the game is repeated with probability $\delta$, the $T_k$ strategies where $k<n-1$, are not stable. 
\end{cor}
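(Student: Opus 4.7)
The plan is to establish instability by exhibiting $T_n$ (unconditional defection) as a mutant that violates the necessary neutral-stability condition $W(T_k \mid p_k = 1) \geq W(T_{k'} \mid p_k = 1)$ stated just before Proposition \ref{neut t n-1}. This is the natural candidate to try first: Proposition \ref{neut t n-1} succeeded for $T_{n-1}$ only because a lone defector drags every cooperator below its retaliation threshold, and the question here is whether that same mechanism survives when the tolerance is loosened to $k < n-1$.

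First, I would examine what happens when a single $T_n$ mutant is inserted into a homogeneous $T_k$ population. Because $p_k = 1$, the mutant is grouped with $n-1$ $T_k$ players with probability one, and each of those $T_k$ co-players sees $n-2$ other cooperators among its $n-1$ group-mates. The hypothesis $k < n-1$ forces $k \leq n-2$, so this count weakly exceeds the retaliation threshold and every $T_k$ keeps cooperating in every stage. Formally, this places the computation in the $j > k$ branch of the $V(T_n \mid j)$ formula in equation (\ref{payoff}), giving $W(T_n \mid p_k = 1) = V(T_n \mid n-1) = F(D \mid n-1)/(1-\delta) = (b(n-1)/n)/(1-\delta)$, whereas $W(T_k \mid p_k = 1) = V(T_k \mid n-1) = (b-c)/(1-\delta)$ by the $j > k$ branch of $V(T_k \mid j)$.

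Subtracting, the per-stage free-rider advantage is $c - b/n$, which is strictly positive under the standard public-goods assumption $b/n < c < b$ that makes the one-shot stage a genuine dilemma. Hence $W(T_n \mid p_k = 1) > W(T_k \mid p_k = 1)$ for every $\delta \in [0,1)$, and even the weak neutral-stability inequality fails, so $T_k$ cannot be stable. The main obstacle is conceptual rather than computational: I need to pinpoint precisely why the stabilizing mechanism of Proposition \ref{neut t n-1} collapses as soon as $k \leq n-2$, and in particular why no choice of $\delta$ can rescue $T_k$, since larger $\delta$ only amplifies an already-losing per-stage gap rather than amortizing a costly single-round retaliation.
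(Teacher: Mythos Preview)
Your argument is correct and mirrors the paper's own proof: both exhibit $T_n$ as the invading mutant, compute $W(T_k\mid p_k=1)=(b-c)/(1-\delta)$ and $W(T_n\mid p_k=1)=\frac{b(n-1)}{n}/(1-\delta)$ via the $j>k$ branch of equation~(\ref{payoff}), and conclude from $c>b/n$ that the strict inequality goes the wrong way for every $\delta$. Your additional remark that each $T_k$ co-player observes $n-2\geq k$ cooperations is exactly the mechanism behind that branch, so the two proofs are essentially identical.
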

\begin{proof}
Suppose that a mutant, $T_n$, appears in a population that is entirely composed of $T_k$ strategy players ($k<n-1$), i.e., $p_k=1$. The expected payoff for the $T_k$ strategy when $p_k$ is equal to 1 is $W(T_k|p_k=1)=(b-c)/(1-\delta)$ and the expected payoff to the mutant $T_n$ strategy when $p_k$ is equal to 1 is $W(T_k|p_k=1)=\frac{b(n-1)}{n}/(1-\delta)$. Because $\frac{b(n-1)}{n}>b-c$ by the assumption of the payoff structure, we have $W(T_{k}|p_{k}=1-\mu)< W(T_n|p_{k}=1-\mu)$ for a sufficiently small \(\mu\) and for all $k < n-1$. 
\end{proof}
That the $T_{n-1}$  strategy is only neutrally stable and not evolutionarily stable raises interesting issues regarding the dynamic property of the equilibrium. Consider a population homogeneously that is composed of individuals using $T_{n-1}$, the least tolerant cooperatrive strategy. Because the $T_{n-1}$  individuals do not allow any free riders in their group, the defection strategy cannot invade this population if the probability of game repetition is sufficiently high. The defectors will be retaliated against immediately and eliminated by the numerically predominant $T_{n-1}$  individuals. However, other $T_{k<n-1}$  strategies can invade the $T_{n-1}$  population because all of the conditional cooperators are behaviorally indistinguishable and receive the same payoffs when the defectors do not exist. And this indistinguishability allows the $T_{k<n-1}$  strategies to remain in the $T_{n-1}$  population. Once a sufficient number of $T_{k<n-1}$ strategies accumulate in the population, the defectors can invade and gain benefits from free riding because the $T_{k<n-1}$  strategies tolerate some defections. Therefore, due to this behavioral indistinguishability between $T_{n-1}$  and $T_{k<n-1}$  in the absence of a defection strategy, the equilibrium state where only the $T_{n-1}$  strategy is present will not persist over long periods.

In the following section, we introduce behavioral mistakes and show how these mistakes change the previous conclusions obtained from an error-free environment.

\section{An $n$-Person Public Goods Game and Behavioral Errors\label{sec error}}
\subsection{Introducing Mistakes}

Let us assume that only conditional cooperators make mistakes (to make a contribution, a cooperator needs to take an action, whereas a defector simply does nothing). For simplicity, a player who made a mistake does not know that it is he/she that made the mistake. Cooperators only know the total number of cooperations in their group, and they determine whether they would cooperate or defect in the next round depending on this number.  

Because some cooperators defect by mistake, the payoff to individuals now depends on the number of errors made by the cooperators as well as the number of cooperators in a group. Let $j$  be the number of cooperators among $n-1$  other members in a group. Let $\psi(j, q, \varepsilon)$  be the probability that $q$  individuals among the $j$  other cooperators in a group make mistakes when the probability of making a mistake is $\varepsilon$.
%
\begin{align*}
\psi(j, q,\varepsilon)=\binom{j}{q}\varepsilon^{q}(1-\varepsilon)^{j-q}~\text{for $\varepsilon > 0$}.
\end{align*}
The payoffs from the one-shot game, given that $j$  among $n-1$  other individuals choose cooperation, now become
\begin{align*}
F^\varepsilon (C|j)&=(1-\varepsilon)\left( \sum_{q=0}^{j}\psi(j, q,
\varepsilon)(b\times {j-q+1\over n}-c)\right)\nonumber+ \varepsilon
\left(\sum_{q=0}^{j}\psi(j, q,
\varepsilon)(b\times {j-q\over n}\right)\\
F^\varepsilon(D|j)&=\sum_{q=0}^{j}\psi(j, q,
\varepsilon)\left(b\times {j-q\over n}\right).
\end{align*}
\normalsize{}Suppose that the game is repeated with probability $\delta$. Let $V^\varepsilon(T_k|j)$ be the payoff for the $T_k$ strategy when $j$ among the $n-1$ other members in his or her group are the conditional cooperators and the conditional  cooperators play defection by mistake with probability $\varepsilon$. Then, the behavioral mistakes affect $V^\varepsilon(T_k|j)$ via whether the conditional cooperators maintain their cooperation in the following stages as well as via a one shot payoff, $F^\varepsilon(C|j)$ and $F^\varepsilon(D|j)$.

\subsection{\label{section t n-1 stability}The stability of the $T_{n-1}$ strategy}

We showed in Proposition \ref{neut t n-1} that the $T_{n-1}$ strategy is neutrally stable (not evolutionarily stable). Now we will show how behavioral mistakes affect the stability of the $T_{n-1}$ strategy. Consider a population that is homogeneously composed of individuals using the $T_{n-1}$ strategy.  A player with the $T_{n-1}$ strategy receives $\myFs{C}$ from the first round and maintains cooperation in the next round only if there is no defection in the current round. Therefore, the mutual cooperation continues only if there are neither defectors' defecting nor cooperators' mistakenly defecting.  In this environment, the payoff for the $T_{n-1}$ strategy from the repeated game is 
 
\begin{eqnarray}\label{Tn-1 Tn-1}
V^\varepsilon(T_{n-1}|n-1)=F^\varepsilon(C|n-1)+(1-\varepsilon)^n\delta V(T_{n-1}|n-1, \varepsilon).
\end{eqnarray}
\normalsize
Rearranging the above condition gives 
\begin{eqnarray}\label{v_cond}
V^\varepsilon(T_{n-1}|n-1)={F^\varepsilon(C|n-1)\over 1-(1-\varepsilon)^n\delta},
\end{eqnarray}
\normalsize{}where $(1-\varepsilon)^n$  is the probability that no conditional cooperators make mistakes, that is, the probability that the periods of mutual cooperation continue without triggering retaliation by any mistakes.
Suppose that there appears a mutant with the $T_n$ strategy in this population. This strategy will immediately trigger the $T_{n-1}$  players' retaliation. Therefore, the payoff for the $T_n$ strategy is
\begin{eqnarray} \label{eq:Tn Tn-1}
\myVs{T_n}=\myFs{D}+\delta\sum_{q=0}^{n-1}\psi(n-1, q, \varepsilon)\times 0.
\end{eqnarray}
\normalsize

Suppose that there appears a mutant with the $T_k$ strategy ($k<n-1$) in the $T_{n-1}$ population. This $k$-cooperator plays $C$ in the first round and keeps cooperating as long as the number of defections that are mistakenly played by the conditional cooperators is less than or equal to $n-k-1$. The payoff for the $T_k$ strategy from the repeated game in a population where all of the other players are $T_{n-1}$ becomes
%
\begin{eqnarray}\label{Tk Tn-1}
V^\varepsilon(T_k|n-1)=F^\varepsilon(C|n-1,
\varepsilon)&+&(1-\varepsilon)^n\delta V^\varepsilon(T_k|n-1)\nonumber\\
&+&\delta\sum_{q=1}^{n-k-1}\psi(n-1,q,
\varepsilon)F^\varepsilon(C|0), 
\end{eqnarray}
\normalsize
where $k<n-1$. The last term on the right-hand side expresses the payoff for the $T_k$  strategy in the next round when $T_{n-1}$  players have already withdrawn their cooperation and $T_{k<n-1}$  players continue cooperating. This situation occurs when the number of mistakes in one stage is greater than 1 and less than or equal to $n-k-1$, and $k$ cooperators will also withdraw cooperation thereafter. The rearrangement gives 
%
\begin{eqnarray}
V^\varepsilon(T_k|n-1)&=&{F^\varepsilon(C|n-1,
\varepsilon)+\sum_{q=1}^{n-k-1}\psi(n,q,\varepsilon)F^\varepsilon(C|0)\over 1-(1-\varepsilon)^n\delta}\nonumber\\
&=&{F^\varepsilon(C|n-1,
\varepsilon)+\sum_{q=1}^{n-k-1}\psi(n,q,\varepsilon)\delta(1-\varepsilon)({b\over
n}-c)\over 1-(1-\varepsilon)^n\delta},\nonumber\\ 
\end{eqnarray}
\normalsize
where $k<n-1$. Here, the behavioral mistakes have two effects on the evolution of the conditional cooperative strategies. First, the mistakes reduce the period of mutual cooperation and have a detrimental effect on the evolution of cooperation. Suppose that a group is entirely composed of $T_{n-1}$  players. Without mistakes, all of the members continue cooperating until the game ends, and the expected length of the period of mutual cooperation is $\frac{1}{1-\delta}$. When the players make mistakes with probability $\varepsilon$, even one mistake will trigger retaliation from the other $T_{n-1}$  players, and the mutual cooperation ends. Therefore, the period of mutual cooperation is reduced to $\frac{1}{1-(1-\varepsilon)^n\delta}$  and the value of $V(T_{n-1}|n-1, \varepsilon)$ in (\ref{v_cond}) becomes smaller.

Second, in an error-prone environment, each conditional cooperator reacts differently to the others' mistakes according to his/her tolerance level. For example, $T_{n-1}$  players immediately begin retaliation by withdrawing cooperation, $T_{n-2}$  players tolerate one mistake, and $T_{n-h}$  players tolerate $h-1$  mistakes. Note that the behavioral indistinguishability between the $T_k$  strategies (where $k<n$) disappears in the absence of $T_n$, and the payoffs for the $T_k$  strategies are no longer the same, even in a situation where the $T_n$  strategy does not exist.
As mentioned in the previous section, the dynamic instability occurs because the $T_{n-1}$  strategy is only neutrally stable, not evolutionarily stable. In other words, the dynamic problem occurs because both $T_{n-1}$  and $T_{k<n-1}$  receive the same payoff in the absence of a defection strategy in an error-free environment. As soon as the possibility of making errors is introduced, the problem of indistinguishability between all of the conditional cooperative strategies in the absence of a defect strategy can be solved because each conditional cooperative strategy reacts differently to the error that the other players make.

The following proposition shows that mistakes can eliminate the indistinguishability between the $T_{k<n}$  strategies so that the state where all of the individuals are $T_{n-1}$  players is evolutionarily stable.

\begin{prop} \label{prop t n-1}
In an n-person public goods game where conditional cooperators have a $T_k$  strategy \upshape{(}\itshape where $k \in \{0, \cdots, n-1 \}$\upshape{)}\itshape , the $T_{n-1}$  strategy is evolutionarily stable when the probability of game repetition is sufficiently close to 1 and the probability of making mistakes is positive but sufficiently small. 
\end{prop}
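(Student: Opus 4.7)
The plan is to verify the evolutionary stability criterion recalled just before Proposition \ref{neut t n-1}: in a population fixed at $T_{n-1}$, the payoff $V^\varepsilon(T_{n-1}|n-1)$ must strictly exceed $V^\varepsilon(T_{k'}|n-1)$ for every mutant $T_{k'}$ with $k' \neq n-1$. The work splits into two cases, a soft-cooperator mutant $T_{k'}$ with $k' < n-1$, and the defector mutant $T_n$; both payoff expressions I need are already displayed in the preceding subsection.

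For $k' < n-1$, I would compare $V^\varepsilon(T_{n-1}|n-1)$ from (\ref{v_cond}) against $V^\varepsilon(T_{k'}|n-1)$ derived just above. Both share the denominator $1 - (1-\varepsilon)^n\delta$, so only the numerators matter. The numerator for $T_{k'}$ differs from that of $T_{n-1}$ only by the extra term $\sum_{q=1}^{n-k'-1}\psi(n,q,\varepsilon)\cdot \delta(1-\varepsilon)\bigl(b/n - c\bigr)$, which is strictly negative as soon as $\varepsilon > 0$, because the public-goods dilemma assumption (invoked in Corollary \ref{stability Tk} via $b(n-1)/n > b - c$) gives $c > b/n$. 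Conceptually this term is the sucker's payoff that a $T_{k'<n-1}$ mutant absorbs whenever a stray mistake has already triggered $T_{n-1}$ retaliation — exactly the payoff difference that the indistinguishability argument of Proposition \ref{neut t n-1} suppresses when $\varepsilon = 0$. Hence strict dominance against every soft-cooperator mutant holds whenever $\varepsilon > 0$ and $\delta < 1$.

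For the defector mutant, I would compare $V^\varepsilon(T_{n-1}|n-1) = F^\varepsilon(C|n-1)/[1-(1-\varepsilon)^n\delta]$ with $V^\varepsilon(T_n|n-1) = F^\varepsilon(D|n-1)$ read off from (\ref{eq:Tn Tn-1}). In the limit $\varepsilon \to 0$ these recover the error-free expressions used in Proposition \ref{neut t n-1}, so the inequality reduces to $\delta > (c - b/n)/(b - b/n)$, which is feasible with $\delta < 1$. Since $F^\varepsilon(C|n-1)$, $F^\varepsilon(D|n-1)$, and the denominator are continuous in $\varepsilon$, a continuity argument produces a threshold $\varepsilon^{*} > 0$ such that the strict inequality persists for every $\varepsilon \in (0,\varepsilon^{*})$ once $\delta$ is taken close enough to $1$. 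Intersecting this parameter range with the range from the first case yields a $(\delta,\varepsilon)$-region on which $T_{n-1}$ strictly dominates every mutant, which is what evolutionary stability requires.

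I do not anticipate a genuinely hard step: the argument is essentially algebraic, and its conceptual novelty sits in the first case, where the negative sucker's-payoff correction converts the tie between $T_{n-1}$ and the softer $T_{k'}$ into strict dominance. The only quantitative care is trading $\varepsilon$ against $\delta$, namely keeping $\varepsilon$ small enough that the shortening of the mutual-cooperation horizon captured by the factor $(1-\varepsilon)^{n}$ in the denominator does not swamp the repeated-game advantage of $T_{n-1}$ over the one-shot defector payoff.
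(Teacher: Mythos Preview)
Your proposal is correct and follows essentially the same route as the paper: the same two-case split, the same observation that the extra summand $\sum_{q=1}^{n-k'-1}\psi(n,q,\varepsilon)\,\delta(1-\varepsilon)(b/n-c)$ is strictly negative for $\varepsilon>0$, and the same comparison of $F^\varepsilon(C|n-1)/[1-(1-\varepsilon)^n\delta]$ against $F^\varepsilon(D|n-1)$ for the defector mutant. The only cosmetic difference is that in case~(ii) the paper computes the ratio $F^\varepsilon(C|n-1)/F^\varepsilon(D|n-1)$ explicitly (it equals $\tfrac{n}{n-1}(1-c/b)$, independent of $\varepsilon$) rather than invoking continuity, but your continuity argument is equally valid.
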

\begin{proof}
The condition for the $T_{n-1}$ strategy  being evolutionarily stable with respect to some other strategy is 
%
\begin{eqnarray}\label{evol_cond}
\begin{array}{ccc}
W(T_{n-1}| p_{n-1}=1)&>&W(T_{k<n-1}| p_{n-1}=1)\\
&\textrm{and}&\\
W(T_{n-1}|p_{n-1}=1)&>&W(T_n| p_{n-1}=1).
\end{array}
\end{eqnarray}
\normalsize
In the proof of Proposition \ref{neut t n-1}, we have already shown that the first condition in Eq (\ref{evol_cond}) is not satisfied with a strict inequality when the players do not make any mistakes. That is, when the probability of making mistakes is zero, we have $W(T_{n-1}|p_{n-1}=1)= W(T_{k<n-1}|p_{n-1}=1)$. However, if the probability of making mistakes is positive, then we have 
\begin{eqnarray}\label{W Tn-1 Tn-1}
W(T_{n-1}| p_{n-1}=1)=V^\varepsilon(T_{n-1}|n-1)m(n-1,1)=\frac{F^\varepsilon(C|n-1)}{1-(1-\varepsilon)^n\delta}~.
\end{eqnarray}
\normalsize{}

(i) Suppose that a mutant with the $T_{k'}$  (where $k'<n-1$ ) appears in a population where $p_{n-1}=1$. With the possibility of mistakes, the expected payoff to the $T_{k'}$ when $p_{n-1}=1$ is  $W(T_{k'}|p_{n-1}=1)=V^\varepsilon(T_{k'}|n-1)m(n-1,1)$. According to Eq (\ref{Tk Tn-1}), the above equation can be expressed as 
\begin{eqnarray}\label{W Tk Tn-1}
W(T_{k'}|p_{n-1}=1)&=&V^\varepsilon(T_{k'}|n-1)m(n-1,1)\nonumber\\
&=& \frac{F^\varepsilon(C|n-1)+\sum_{q=1}^{n-k'-1}\psi(n,q,\varepsilon)\delta(1-\varepsilon)(\frac{b}{n}-c)}{1-(1-\varepsilon)^n\delta}.
\end{eqnarray}
\normalsize{}Because $0<\frac{b}{n}<c$  by the assumption, comparing Eq (\ref{W Tn-1 Tn-1}) and Eq (\ref{W Tk Tn-1}) gives 
\begin{eqnarray}
W(T_{n-1}|p_{n-1}=1)>W(T_{k'}|p_{n-1}=1)
\qquad \textrm{for any }k'<n-1,
\end{eqnarray}
when the probability of making a mistake is positive.

(ii) Suppose that a mutant having the $T_n$ strategy appears in a population where $p_{n-1}=1$. When $\varepsilon>0$, the payoff to $T_n$  from the repeated public goods game in this situation is  
\begin{eqnarray}
W(T_n|p_{n-1}=1)&=&V^\varepsilon(T_n|n-1)m(n-1, 1)\nonumber\\
&=&F^\varepsilon(D|n-1).
\end{eqnarray}
\normalsize{}Therefore, for $W(T_{n-1}|p_{n-1}=1)> W(T_{n}|p_{n-1}=1)$, we should have  
\begin{eqnarray}
V^\varepsilon(T_{n-1}|n-1)={F^\varepsilon(C|n-1)\over 1-(1-\varepsilon)^n\delta}>F^\varepsilon(D|n-1)=V^\varepsilon(T_n|n-1).
\end{eqnarray}
\normalsize{}The above can be satisfied if 
\begin{eqnarray}
(1-\varepsilon)^n\delta>1-{F^\varepsilon(C|n-1, \varepsilon)\over
F^\varepsilon(D|n-1, \varepsilon)}.
\end{eqnarray}
\normalsize
We can easily show that $\frac{F^\varepsilon(C|n-1, \varepsilon)}{F^\varepsilon(D|n-1, \varepsilon)}=\frac{n}{n-1}(1-\frac{c}{b})$ and that this is always less than 1 by the assumption $\frac{b}{n}-c<0$. That is, the right-hand side of the condition is always less than 1. Therefore, with a sufficiently large $\delta$ and sufficiently a small $\varepsilon$, the above condition is satisfied.   
\end{proof}
As long as there is a probability of making mistakes, the indistinguishability between the hardest cooperators and the other softer cooperators disappears, which makes the $T_{n-1}$ equilibrium state evolutionarily stable under certain parameter values. The $T_k$  strategies ($k<n-1$) are now behaviorally distinguishable from the $T_{n-1}$ strategy  because the conditional cooperative strategies are, depending on the level of $k$, different in their responding to the mistakes, even though there are no defectors in the population. In other words, in an environment where the players make mistakes, the $T_k$  strategies cannot invade a population that is homogeneously composed of individuals using the $T_{n-1}$  strategy.

\subsection{\label{section t k stability}The stability of the $T_k$ strategy {\upshape(}$k<n-1${\upshape)}.}

Mistakes also affect the evolutionary stability of the softer strategies. Note that the $T_k$ strategies when $k<n-1$ are not stable at all in an error-free environment because these strategies allow the universal defection strategy (i.e., the $T_n$ strategy) to enjoy benefits from free riding on the cooperation of these strategies (see Corollary \ref{stability Tk}).

To see the effect of mistakes on the stability of the $T_k$ strategy, consider a population that entirely composed of players who have  a $T_k$ strategy with the same hardness level $k$. 
A player with the $T_k$ strategy receives $\myFs{C}$ from the first round and maintains cooperation in the next round as long as the number of defections is less than $n-k-1$. Therefore, the mutual cooperation continues when
\begin{itemize}
\item the number of mistakes made by the $n-1$ other conditional cooperators is less than or equal to $n-k-2$ (in this case, the maintenance of mutual cooperation does not depend on whether the focal individual makes a mistake or not), or
\item the number of mistakes made by the $n-1$ other conditional cooperators is exactly equal to $n-k-1$, and the focal player does not make a mistake. 
\end{itemize} 
The first case occurs with probability $\sum_{q=0}^{n-k-2}\psi(n-1,q,\varepsilon)$ and the second case with probability $\psi(n-1, n-k-1, \varepsilon)\times (1-\varepsilon)$. By the same token, the mutual cooperation breaks when 
\begin{itemize}
\item the number of mistakes made by the $n-1$ other conditional cooperators is exactly equal to $n-k-1$, and the focal player makes a mistake, or
\item the number of mistakes made by the $n-1$ other conditional cooperators is greater than $n-k-1$. 
\end{itemize} 
Each case occurs with probability $\psi(n-1, n-k-1, \varepsilon)\times \varepsilon$ and $\sum_{q=n-k}^{n-1}\psi(n-1,q,\varepsilon)$, respectively. Summing up all of the possible cases, the expected payoff for the $T_k$ strategy when the population is entirely composed of individuals following the $T_k$ strategy ($k<n-1$) is written in the following way. 
 
\begin{align}\label {eq:Ta}
\begin{aligned}
\myVs{T_k}=&\myFs{C} + \\
& \delta 
\begin{cases}
\sum_{q=0}^{n-k-2} \myPsi{q} \myVs{T_k} + ~\\
 \myPsi{n-k-1} [ \varepsilon \times 0 + (1-\varepsilon) \times \myVs{T_k} ] + ~ \\
\sum_{q=n-k}^{n-1} \myPsi{q}\times 0, \end{cases}
\end{aligned}
\end{align}
\normalsize
if $k<n-1$.

Suppose that there appears one defector (i.e., a player with strategy $T_n$) in the homogenous population of a $T_k$ strategy where every player has the same tolerance level $k<n-1$. When the game is repeated with probability $\delta$ , the payoff to the $T_n$ player from the repeated game is calculated by
\begin{eqnarray} \label{eq:Tn against k}
\myVs{T_n}&=&\myFs{D}\nonumber\\
&&+\underbrace{\delta\sum_{q=0}^{n-k-2} \myPsi{q}\myVs{T_n}}_{(i)} +\underbrace{\delta\sum_{q=n-k-1}^{n-1} \myPsi{q}\times 0}_{(ii)}.\nonumber\\
\end{eqnarray}
\normalsize
As long as the number of defections, either due to mistakes by the $T_k$ players or the defection of the $T_n$ player, is less than $n-k-1$, $T_k$ players will tolerate defections and  maintain cooperation. In this case, the $T_n$ player keeps enjoying a free rider's advantage from the repeated interaction (see (\textit{i}) in Eq (\ref{eq:Tn against k})). When the number of defections exceeds $n-k-1$, all of the $T_k$ players trigger retaliation, which gives every member a 0 payoff in the following rounds (see (\textit{ii}) in Eq (\ref{eq:Tn against k})). 

It will be shown that it is crucial to have $\myVs{T_k}-\myVs{T_n}>0$ to prove the evolutionarily stability of the $T_k$ strategy. Rearranging Eq (\ref{eq:Ta}) and Eq (\ref{eq:Tn against k}) gives

\begin{eqnarray}
V^\varepsilon(T_k|n-1)=\frac{F^\varepsilon(C|n-1)}{1-\delta\sum_{q=0}^{n-k-2}\psi(n-1, q, \varepsilon)-\delta(1-\varepsilon)\psi(n-1, n-k-1,\varepsilon)}
\end{eqnarray}
and
\begin{eqnarray}
V^\varepsilon(T_n|n-1)=\frac{F^\varepsilon(D|n-1)}{1-\delta\sum_{q=0}^{n-k-2} \psi(n-1, q, \varepsilon)},
\end{eqnarray}
respectively. The condition that the payoff for the $T_k$ strategy is greater than the payoff for the $T_n$ strategy is   
\small{}
\begin{eqnarray}
&&\myVs{T_k}-\myVs{T_n} \nonumber\\
&&= \frac{\myF{C}}{1-\delta\sum_{q=0}^{n-k-2}\myPsi{q}-\delta (1-\varepsilon)\myPsi{n-k-1} }- 
\frac{\myF{D}}{1-\delta\sum_{q=0}^{n-k-2}\myPsi{q} }\nonumber\\
&&= \frac{\myF{D}}{ 1-\delta\sum_{q=0}^{n-k-2}\myPsi{q}-\delta (1-\varepsilon)\myPsi{n-k-1} } \times 
\left[\Delta (\varepsilon;k)-\left( 1- \frac{F^\varepsilon(C|n-1)}{F^\varepsilon(D|n-1)} \right)\right],\nonumber\\
\end{eqnarray}
\normalsize{}where $\Delta(\varepsilon;k)=\frac{\delta(1-\varepsilon)\myPsi{n-k-1}}{1-\delta\sum_{q=0}^{n-k-2}\myPsi{q}}$. Note that the sign of $\myVs{T_k}-\myVs{T_n}$ depends on the sign of $\Delta (\varepsilon;k)-(1-\frac{F^\varepsilon(C|n-1)}{F^\varepsilon(D|n-1)} )$. Because it holds that $0< 1-\frac{F^\varepsilon(C|n-1)}{F^\varepsilon(D|n-1)} <1$, and $\frac{F^\varepsilon(C|n-1)}{F^\varepsilon(D|n-1)}$ is fixed by the payoff structure of the public goods game, it is crucial to have a sufficiently large $\Delta(\varepsilon;k)$.  

A $T_k$ strategy (where $k \in \{ 1,\cdots,n-2 \}$) is evolutionarily stable if 
\begin{eqnarray}
W(T_k|p_k=1)>W(T_{k'}|p_k=1)\quad \forall k'\neq k. 
\end{eqnarray}

In the following proposition, we will show that all of the conditional cooperative strategies are evolutionarily stable when $\varepsilon$ is positive and within a proper open interval and when $\delta$ is sufficiently large. We will show that $W(T_k|p_k=1)>W(T_{k'}|p_k=1)$ for $k \in \{ 1, \ldots, n-2 \}$ in all of the following three cases where (1) a mutant is a conditional cooperator who has a softer strategy (i.e., $k'<k$), (2) a mutant is a conditional cooperator who has a harder strategy (i.e., $k<k'<n$), or (3) a mutant is a defector (i.e., $k'=n$). 

\begin{prop}\label{prop t k}
In an n-person public goods game where the conditional cooperators have a $T_k$ strategy $(k\in \{1, 2, \cdots, n-1\})$, the $T_k$ strategies where $0<k<n-1$ are evolutionarily stable when the probability of game repetition is sufficiently close to 1 and the probability of making mistakes is positive and sufficiently small. 
\end{prop}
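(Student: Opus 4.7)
The plan is to verify $W(T_k\mid p_k = 1) > W(T_{k'}\mid p_k = 1)$ for every $k'\neq k$ by splitting into the three exclusive subcases indicated by the author: (a) a softer conditional cooperator $k'<k$, (b) a harder conditional cooperator $k<k'<n$, and (c) the unconditional defector $k'=n$. In all three cases the homogeneous payoff $W(T_k\mid p_k=1)=V^\varepsilon(T_k\mid n-1)$ is already in closed form from equation (\ref{eq:Ta}), so only $W(T_{k'}\mid p_k=1)$ needs a new derivation.

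For case (c) I would reuse the algebra already developed immediately above the statement: the inequality reduces to $\Delta(\varepsilon;k) > 1 - F^\varepsilon(C\mid n-1)/F^\varepsilon(D\mid n-1)$. The right-hand side equals the fixed constant $(nc-b)/[(n-1)b]\in(0,1)$, determined purely by the payoff structure $b/n<c<b$. For the left-hand side, I would let $\delta\to 1$ first and then observe that $\psi(n-1,n-k-1,\varepsilon)$ is the lowest-order term in $\varepsilon$ of the sum $\sum_{q=n-k-1}^{n-1}\psi(n-1,q,\varepsilon)$, so the quotient tends to $1-\varepsilon$ and then to $1$ as $\varepsilon\to 0$. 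Hence $\Delta(\varepsilon;k)$ strictly exceeds the fixed threshold whenever $\delta$ is sufficiently close to $1$ and $\varepsilon$ is positive but sufficiently small.

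For cases (a) and (b) I would construct a recursion for $V^\varepsilon(T_{k'}\mid n-1)$ that mirrors (\ref{eq:Ta}) but carries two additional off-equilibrium states: state S1, in which the focal still intends to cooperate after the incumbents have collapsed, and state S2, in which the focal intends to defect while the incumbents still intend to cooperate. For each realization of $(q,r)$ --- the number of mistakes among the $n-1$ incumbents and the focal's own mistake --- one checks whether the incumbents' observations push them into collapse and whether the focal's observation pushes it into its off-equilibrium decision. In case (a), $k'<k$, the focal is strictly more tolerant, so at leading order every event that collapses the incumbents (probability $\Theta(\varepsilon^{n-k})$) leaves the focal in state S1, where it earns the strictly negative one-shot $F^\varepsilon(C\mid 0)=(1-\varepsilon)(b/n-c)<0$ before itself switching to defection, whereas a $T_k$ focal would earn $0$ in the same realization; thus $T_{k'}$ strictly underperforms $T_k$ at the leading order $\Theta(\varepsilon^{n-k})$. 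In case (b), $k<k'<n$, the focal instead enters S2 at rate $f=\Theta(\varepsilon^{n-k'})$ per round and pockets the one-shot bonus $F^\varepsilon(D\mid n-1)-F^\varepsilon(C\mid n-1)\approx c-b/n>0$; but once in S2 the focal acts as a deterministic extra defector in every incumbent's observation, so the incumbents' slack shrinks by one and the per-round collapse probability jumps from $p_k=\Theta(\varepsilon^{n-k})$ to $p_{\mathrm{S2}}=\Theta(\varepsilon^{n-k-1})$. Solving the resulting two-state linear system and expanding in the limit $\delta\to 1$, the sign of $V^\varepsilon(T_{k'}\mid n-1)-V^\varepsilon(T_k\mid n-1)$ is determined by the sign of $p_k F^\varepsilon(D\mid n-1)-F^\varepsilon(C\mid n-1)p_{\mathrm{S2}}$; because $p_k/p_{\mathrm{S2}}=\Theta(\varepsilon)\to 0$, the second (negative) term dominates and the expected surplus-destruction risk from triggering S2 strictly outweighs the immediate free-rider gain.

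The main obstacle I anticipate is the off-equilibrium bookkeeping in cases (a) and (b). Because each incumbent $T_k$ excludes itself from its own $(n-1)$-other count, the observed cooperator counts do not coincide across incumbents whenever the focal plays a different action from them, so the ``all-or-nothing'' collapse approximation that is implicit in (\ref{eq:Ta}) has to be re-justified when the focal and incumbents disagree. I would handle this by arguing that sample paths on which the focal and some incumbents persist with divergent intentions for more than one round contribute only at strictly higher orders in $\varepsilon$ than the leading terms isolated above, so collapsing them into the simple three-state (mutual-C vs.\ S1/S2 vs.\ fully collapsed) approximation does not reverse the sign of any comparison. Once this reduction is justified, the three asymptotic comparisons combine to deliver the strict evolutionary stability of $T_k$ for $\delta$ sufficiently close to $1$ and $\varepsilon$ positive but sufficiently small.
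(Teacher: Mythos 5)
Your proposal is correct and follows essentially the same route as the paper's Appendix~A: the same three-way split into softer mutants (settled by the strictly negative one-shot payoff $F^\varepsilon(C|0)=(1-\varepsilon)(\tfrac{b}{n}-c)<0$, the paper's Lemma~3), harder mutants (absorbed into the defector's continuation value once they stop cooperating, the paper's Lemma~2), and the pure defector (reduced to $\Delta(\varepsilon;k)>1-F^\varepsilon(C|n-1)/F^\varepsilon(D|n-1)$ with limits $\delta\to 1$, $\varepsilon\to 0$, the paper's Lemma~1). The only substantive difference is that for harder mutants you close the argument with leading-order asymptotics in $\varepsilon$ from a two-state system, whereas the paper proves an exact algebraic bound showing that $V^\varepsilon(T_k|n-1)-V^\varepsilon(T_{k'}|n-1)$ has the same sign as $V^\varepsilon(T_k|n-1)-V^\varepsilon(T_n|n-1)$ for all $\varepsilon$, which it then reuses for the equilibrium-selection result.
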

\begin{proof}
See Appendix A. 
\end{proof}
\vspace{0.15in}

$T_k$ strategies with ($0<k<n-1$) are not stable at all in an error-free environment but become evolutionarily stable when mistakes are introduced, i.e., they do not allow the invasion of other strategies into their population because the mistakes made by the conditional cooperators produce unintended defections among themselves and leave little room for the defector's free riding. The proof of the above proposition consists of three steps. First, when conditional cooperators make mistakes the expected payoff for the $T_k$ strategy in the repeated game is greater than the expected payoff to the mutant $T_n$ strategy (see Appendix A, Lemma 1). Behavioral mistakes, if they occur with a sufficiently low probability, no longer allow the defectors to invade the population that homogeneously consists of individuals using the $T_k$ strategy (where $0<k<n-1$). Behavioral mistakes produce unintended defections among the cooperators (i.e., even without a defector's invasion), so that softer conditional cooperators are less likely to tolerate a defector's invasion. Second, if the expected payoff for the $T_k$ strategy in the repeated game is greater than that for the mutant $T_n$ strategy, then it is also true that the expected payoff for the $T_k$ strategy in the repeated game is greater than the mutant $T_{k'}$ strategy's payoff if $k'$ is greater than $k$, i.e., if the mutants are harder conditional cooperators (see Appendix A, Lemma 2). Third, behavioral mistakes also make the expected payoff to the $T_k$ strategy in the repeated game when one of these strategies invades $T_k$ population greater than the other conditional cooperative strategies that have$k'$  lower  than $k$ (see Appendix A, Lemma 3). Error makes those strategies distinguishable and makes the payoff for the $T_k$ strategy higher (same logic as proof for Proposition \ref{prop t n-1}).   

In an error-free environment, no conditional cooperative strategies are evolutionarily stable. The $T_{n-1}$ strategy is neutrally stable and is subject to the drift ultimately destroying the cooperative equilibrium, and none of the $T_{k<n-1}$ strategies are stable at all. Proposition \ref{prop t n-1} and  Proposition \ref {prop t k} say that  behavioral errors, if they occur with sufficiently low probability, can make \textit{all} conditional cooperative strategies evolutionarily stable.

\subsection{\label{section selection}Error and selection among evolutionarily stable strategies}

In the above section, we showed that conditional cooperative strategies, $T_k$ for $k \in \{1, \cdots, n-1\}$, are evolutionarily stable when $\varepsilon$ is positive but sufficiently low. Furthermore, each $T_k$ strategy has its own critical value of $\varepsilon_k$ such that the strategy is evolutionary stable in the range of $\varepsilon \in (0, \varepsilon_k)$. If the critical value of each $T_k$ strategy differs according to the hardness level $k$, the magnitude of the probability that a behavioral error occurs could serve as a criterion for equilibrium selection.   

In this section, we will show that the range of $\varepsilon$ supporting the evolutionary stability of the $T_k$ strategy varies according to the tolerance of the conditional cooperative strategies. Let $\varepsilon_k$ be the supremum of $\varepsilon$ that supports the evolutionary stability of a $T_k$ strategy (where $k\in \{1, \cdots, n-1\}$). We will show that $\varepsilon_k$ increases as $k$ decreases, that is, $0<\varepsilon_{n-1}<\varepsilon_{n-2}<\cdots<\varepsilon_1$. For example, if the probability of making an error is sufficiently low so that $0<\varepsilon<\varepsilon_{n-1}$, then all of the conditional cooperative strategies are evolutionarily stable. If the error rate is in the range of $\varepsilon_{n-1}\leq \varepsilon <\varepsilon_{n-2}$, then all of the conditional cooperative strategies except for $T_{n-1}$ are evolutionarily stable, and so on. Lastly, if the error rate is higher than or equal to $\varepsilon_1$, no conditional cooperative strategies are evolutionarily stable, in which case only the universal defection strategy, $T_n$, is evolutionarily stable. In other words, all conditional cooperative strategies are evolutionarily stable when $0<\varepsilon <\varepsilon_{n-1}$ and fewer of the conditional cooperative strategies remain evolutionarily stable as $\varepsilon$ increases. 
\begin{prop}\label{prop_equilibrium selection}
 Let $\varepsilon_k$ be the supremum of $\varepsilon$ that supports the evolutionary stability of the $T_k$ strategy \upshape(\itshape where $k\in \{1, \cdots, n-1\}$\upshape)\itshape. Then, $\varepsilon_k$ increases as $k$ decreases, that is, $0<\varepsilon_{n-1}<\varepsilon_{n-2}<\cdots<\varepsilon_1$.
\end{prop}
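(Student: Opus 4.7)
My plan is to characterize $\varepsilon_k$ implicitly from the binding stability constraint and then to compare these implicit characterizations across consecutive values of $k$.

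First, I would identify which of the three invasion tests appearing in the proof of Proposition~\ref{prop t k} is the one that ceases to hold as $\varepsilon$ approaches $\varepsilon_k$ from below. Guided by the informal discussion following Proposition~\ref{prop t k} --- which states that stability against a harder cooperator $T_{k'>k}$ is implied by stability against $T_n$, while stability against a softer cooperator $T_{k'<k}$ rests on a separate distinguishability argument that stays slack in the relevant range --- I would isolate the constraint $V^{\varepsilon}(T_k|n-1)>V^{\varepsilon}(T_n|n-1)$ as the binding one. Using the reformulation in Section~\ref{section t k stability}, this condition becomes
\begin{equation}
\Delta(\varepsilon;k) \;>\; 1 - \frac{F^{\varepsilon}(C|n-1)}{F^{\varepsilon}(D|n-1)} \;=\; \frac{nc-b}{b(n-1)} \;\equiv\; T^{*},
\end{equation}
where cancellation of a common $(1-\varepsilon)$ factor in the ratio leaves a constant $T^{*}\in(0,1)$ that depends on neither $\varepsilon$ nor $k$. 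Thus $\varepsilon_k$ is pinned down by $\Delta(\varepsilon_k;k)=T^{*}$.

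Next, I would establish the strict inequality $\varepsilon_{k-1}>\varepsilon_k$ for each $k\in\{2,\ldots,n-1\}$ by proving $\Delta(\varepsilon;k-1)>\Delta(\varepsilon;k)$ at $\varepsilon=\varepsilon_k$. By continuity of $\Delta$ in $\varepsilon$, a strict inequality at $\varepsilon_k$ supplies a right neighborhood on which $\Delta(\,\cdot\,;k-1)>T^{*}$, yielding $\varepsilon_{k-1}>\varepsilon_k$. Letting $X\sim\text{Binom}(n-1,\varepsilon)$ and $t=n-k-1$,
\begin{equation}
\frac{\Delta(\varepsilon;k-1)}{\Delta(\varepsilon;k)} \;=\; \frac{P(X=t+1)}{P(X=t)} \cdot \frac{1-\delta\,P(X<t)}{1-\delta\,P(X\le t)}.
\end{equation}
The first factor, $(n-1-t)\varepsilon/[(t+1)(1-\varepsilon)]$, is small for small $\varepsilon$; the second equals $1+\delta P(X=t)/(1-\delta P(X\le t))$, which becomes arbitrarily large as $\delta\to 1$ and $P(X\le t)$ approaches $1$. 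To close the argument at $\varepsilon=\varepsilon_k$, I would substitute $\Delta(\varepsilon_k;k)=T^{*}$ to eliminate the dependence on $\delta$ from the second factor and reduce the comparison to a polynomial inequality in $\varepsilon_k$; in the base case $k=n-1$ this reduces to $\varepsilon_{n-1}>(1-T^{*})/n$, which (using $\varepsilon_{n-1}=1-(T^{*}/\delta)^{1/n}$) holds once $\delta$ is sufficiently close to $1$.

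The main obstacle is precisely this algebraic comparison of $\Delta$ across $k$: both the single-term binomial mass in the numerator and the cumulative binomial tail in the denominator change when $k$ decreases, and the two effects pull in opposite directions. I would handle this by writing the difference $\Delta(\varepsilon;k-1)-\Delta(\varepsilon;k)$ over the common denominator $(1-\delta P(X<t))(1-\delta P(X\le t))$ and factoring out the mass $\psi(n-1,t,\varepsilon)$, reducing the sign question to a polynomial in $\varepsilon$ whose positivity on $(0,\varepsilon_k]$ follows from $\delta$ being sufficiently close to $1$. Positivity $\varepsilon_{n-1}>0$ is already contained in Proposition~\ref{prop t n-1}, and the chain $\varepsilon_{n-1}<\varepsilon_{n-2}<\cdots<\varepsilon_1$ then follows by iterating the comparison.
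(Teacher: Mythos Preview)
Your plan identifies the same binding constraint as the paper --- that $\varepsilon_k$ is determined by $\Delta(\varepsilon_k;k)=T^*$ with $T^*=1-F^{\varepsilon}(C|n-1)/F^{\varepsilon}(D|n-1)$ independent of $\varepsilon$ and $k$ --- so the setup is right. The divergence is in how you compare the $\Delta(\cdot;k)$ across $k$. You aim for a \emph{pointwise} inequality at $\varepsilon=\varepsilon_k$, writing the ratio $\Delta(\varepsilon;k-1)/\Delta(\varepsilon;k)$ as a product of two factors that pull in opposite directions, then substituting the threshold relation to reduce to a polynomial whose positivity you assert ``follows from $\delta$ sufficiently close to $1$.'' The paper instead proves the \emph{uniform} inequality $\Delta(\varepsilon;k)>\Delta(\varepsilon;k+1)$ for all $\varepsilon\in(0,1)$ directly at the $\delta\to 1$ limit: there, $1-\delta\sum_{q\le n-k-2}\psi$ collapses to the upper tail $\sum_{q\ge n-k-1}\psi$, and after dividing numerator and denominator by $\psi(n-1,n-k-1,\varepsilon)$ the comparison becomes a one-line check on sums of binomial-coefficient ratios. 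Since Lemma~1 already gives that each $\Delta(\cdot;k)$ is strictly decreasing in $\varepsilon$ with a unique crossing of $T^*$, the uniform ordering in $k$ immediately yields $\varepsilon_{k+1}<\varepsilon_k$.

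Your route is not wrong, but the general step is the place where it remains a sketch: the ``polynomial in $\varepsilon$ whose positivity follows from $\delta$ close to $1$'' is exactly the inequality the paper proves cleanly at $\delta=1$, so you would end up reproducing that computation anyway. The payoff of the paper's approach is that it avoids the opposing-factors ratio and any dependence on the threshold $\varepsilon_k$, and it leverages the monotonicity from Lemma~1 rather than just continuity. Your pointwise argument does have one conceptual advantage --- it makes explicit that only the behavior near $\varepsilon_k$ matters --- but for this particular $\Delta$ the uniform comparison is both stronger and easier.
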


\begin{proof}
See Appendix B. 
\end{proof}

In other words, as $\varepsilon$ increases, the $T_k$ strategies with a lower $k$ (i.e., with a higher tolerance level) remain evolutionarily stable. To understand the role of behavioral mistakes in the equilibrium selection in the above proposition, we need to examine the two effects that the mistakes produce. First, the mistakes lower the probability that the mutually cooperative phase continues because they trigger retaliation towards cooperators' unintended defections. Secondly, they also reduce the possibility that the defectors enjoy benefits from free riding on tolerant cooperators. If the incumbent conditional cooperators have a lower $k$, the first effect is less detrimental because the cooperative equilibrium based on the lower $k$ is less vulnerable to a breakdown when a mistake occurs with higher probability.

\section{The General Case Where $\delta$ is Sufficiently High But Not at the Limit of $1$}\label{sec general case}
One should note that only the supremum of $\varepsilon$ matters in Proposition \ref{prop_equilibrium selection}. We proved the proposition at the $\delta=1$ limit, in which case the infimum becomes meaningless as long as the rate of mistakes is positive. At the $\delta=1$ limit, error will at some point terminate the game and enable the $T_k$ strategy with a low $k$ to block the defectors from endlessly accumulating a free rider's benefit. That is, at the $\delta=1$ limit, error appears to work as long as $\varepsilon$ is positive (See Figure \ref{figure} (a)).  

\begin{figure}[p]
\centering
\resizebox{90mm}{!}{\includegraphics{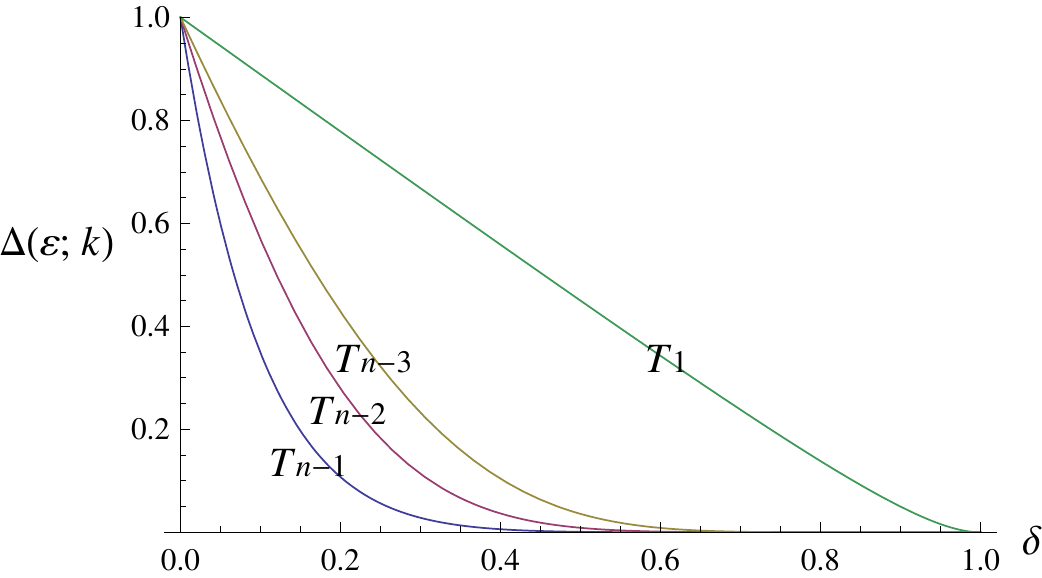}}\\
\vspace{-0.1in}(a) $\delta=1$\\
\vspace{0.25in}
\resizebox{90mm}{!}{\includegraphics{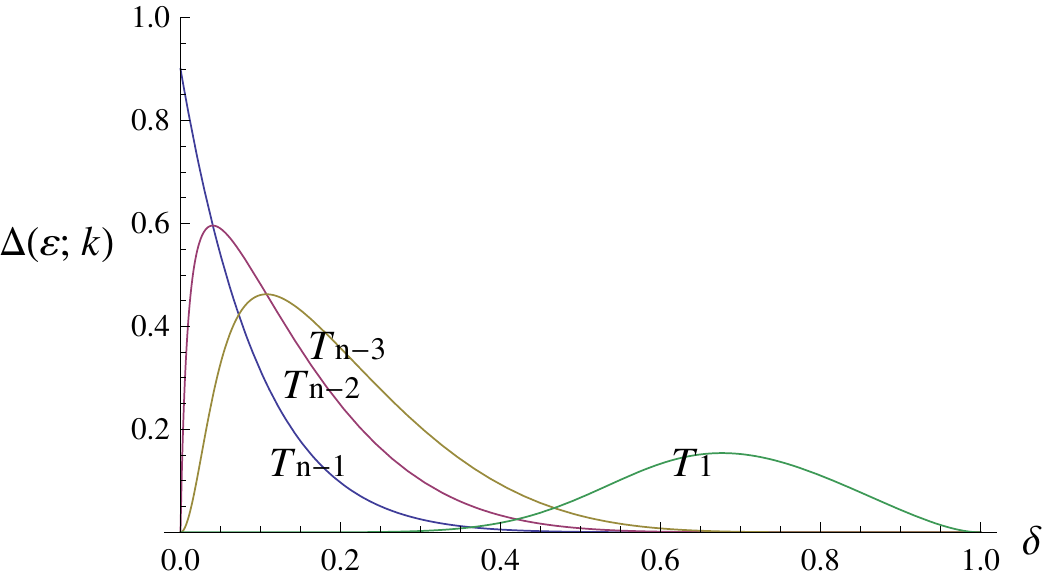}}\\
\vspace{-0.1in}(b) $\delta=0.9$\\
\vspace{0.25in}
\resizebox{90mm}{!}{\includegraphics{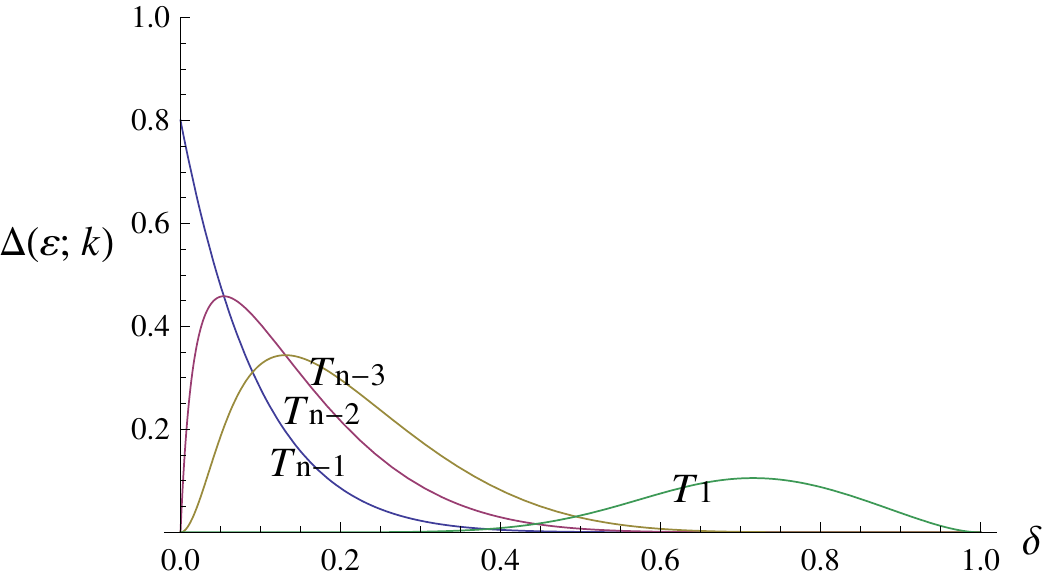}}\\
\vspace{-0.1in}(c) $\delta=0.8$
\caption{$\Delta(\varepsilon; k)$ with $\delta$. ($n=10$)}\label{figure}
\end{figure}

However, when $\delta$ is large but not at the limit of 1, the $T_k$ strategy ($0<k<n-1$) needs a sufficiently high rate of $\varepsilon$ to block the defectors' free riding. The role of error in supporting the evolutionary stability of cooperative strategies depends on whether it produces enough defections among the cooperators before the defectors' invasion. Note that the number of mistakes to prevent defectors from invading depends on the level of tolerance. As $k$ becomes lower, more unintended defections among the cooperators are needed to terminate the game repetition (i.e., to block a defector from gaining a free riding benefit) in case one defector appears in a group. In other words, there should be an infimum of $\varepsilon$ to make the $T_k$ strategy evolutionarily stable. In this section, we show that the conclusions derived in the former section are still valid with some modification, especially for the infimum in the general case where $\delta$ is sufficiently high but not at the limit of 1.    

Here, we analyze a more general case where $\delta$ is less than 1. At the $\delta=1$ limit, we showed that each $T_k$ strategy, where $k<n$, has the supremum, $\varepsilon_k$, such that the $T_k$ strategy is evolutionary stable for $\varepsilon \in (0, \varepsilon_k)$ and that $\varepsilon_k$ is increasing as $k$ decreases.

In the following propositions, we will now show that each conditional cooperative $T_k$ strategy has its own band of error rate $(\underline{\varepsilon}_k, \overline{\varepsilon}_k$) that makes the $T_k$ strategy evolutionarily stable. In other words, there appears an infimum of $\varepsilon$ that supports the evolutionary stability of each $T_k$ strategy and the infimum moves toward zero as $\delta$ approaches 1. Now we can provide a generalized versions of Proposition \ref{prop t n-1}, Proposition \ref{prop t k} and Proposition \ref{prop_equilibrium selection}. 

\begin{prop}\label{general stability}
In an n-person public goods game where the conditional cooperators have a $T_k$ strategy $(k\in \{1, 2, \cdots, n-1\})$, all of the $T_k$ strategies are evolutionarily stable when the probability of game repetition is sufficiently high and the probability of making mistakes is in the range of \upshape($\underline{\varepsilon}_k, \overline{\varepsilon}_k$).  
\end{prop}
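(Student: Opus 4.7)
The plan is to build directly on the analysis of $\Delta(\varepsilon;k)$ developed in Section \ref{section t k stability}, and extend it away from the $\delta=1$ limit. Recall that the decisive inequality for stability against $T_n$ is
\[
\Delta(\varepsilon;k) \;>\; 1-\frac{F^\varepsilon(C|n-1)}{F^\varepsilon(D|n-1)},
\]
whose right-hand side is bounded and strictly less than $1$. So the problem essentially reduces to understanding the shape of the map $\varepsilon\mapsto\Delta(\varepsilon;k)$ for $\delta$ close to, but strictly below, $1$.

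First I would observe that for each fixed $k\in\{1,\ldots,n-2\}$ and $\delta<1$, $\Delta(\varepsilon;k)$ is continuous on $[0,1]$, vanishes at both endpoints (since $\psi(n-1,n-k-1,0)=0$ and $\psi(n-1,n-k-1,1)=0$ whenever $1\le n-k-1<n-1$), and is strictly positive on the interior. It therefore attains a maximum at some interior point $\varepsilon^{\star}_k(\delta)$. Moreover $\Delta(\varepsilon;k)$ is increasing in $\delta$, because the denominator $1-\delta\sum_{q=0}^{n-k-2}\psi(n-1,q,\varepsilon)$ shrinks as $\delta$ grows. Since the $\delta=1$ analysis of Proposition \ref{prop t k} established that the supremum of $\Delta$ strictly exceeds the threshold, continuity and monotonicity in $\delta$ guarantee that the same holds for all $\delta$ sufficiently close to $1$. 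The intermediate value theorem then produces two crossings $\underline{\varepsilon}_k<\overline{\varepsilon}_k$, with $\Delta(\varepsilon;k)$ above the threshold precisely on $(\underline{\varepsilon}_k,\overline{\varepsilon}_k)$, and $\underline{\varepsilon}_k\downarrow 0$ as $\delta\uparrow 1$. For $k=n-1$, since $\psi(n-1,0,\varepsilon)=(1-\varepsilon)^{n-1}$ is positive at $\varepsilon=0$, the argument of Proposition \ref{prop t n-1} already supplies a band of the form $[0,\overline{\varepsilon}_{n-1})$, so no new infimum is needed.

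With the defector comparison in hand, I would then import the two remaining inequalities from Appendix A (Lemmas 2--3), which place $W(T_k|p_k=1)$ above the payoffs of harder and softer conditional cooperators in the $\delta=1$ limit. Because every relevant payoff is jointly continuous in $(\delta,\varepsilon)$ and these inequalities are strict at $\delta=1$ on a closed subinterval of the defector band, a standard compactness/continuity argument produces a $\delta^{\star}<1$ such that the strict inequalities persist for all $\delta\in(\delta^{\star},1)$ on (a possibly slightly shrunken version of) the same band. Taking $\delta^{\star}$ large enough to accommodate every $k\in\{1,\ldots,n-1\}$ simultaneously, and relabeling the shrunken interval as $(\underline{\varepsilon}_k,\overline{\varepsilon}_k)$, completes the argument.

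The main obstacle is exactly this compatibility step. Unlike the $\delta=1$ limit where one could simply ``take $\varepsilon$ small enough'' to satisfy every comparison at once, at $\delta<1$ the defector comparison forces a strictly positive infimum $\underline{\varepsilon}_k$, and one must check that this infimum does not push the interval outside the $\varepsilon$-region in which the cooperator-versus-cooperator comparisons remain valid. I expect this to work because, as $\delta\uparrow 1$, the defector band $(\underline{\varepsilon}_k,\overline{\varepsilon}_k)$ slides toward $(0,\overline{\varepsilon}_k)$, which lies well inside the small-$\varepsilon$ neighborhoods exploited in Appendix A; a single uniform choice of $\delta^{\star}$ close enough to $1$ then simultaneously secures all three inequalities.
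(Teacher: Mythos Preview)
Two things separate your approach from the paper's.

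First, your claim that $\Delta(\varepsilon;k)$ exceeds the threshold \emph{precisely} on $(\underline{\varepsilon}_k,\overline{\varepsilon}_k)$ requires $\Delta$ to be single-peaked, and nothing in your argument establishes that: continuity, vanishing at the endpoints, and the existence of an interior maximum together yield at least two crossings, not exactly two, so the super-level set could a priori be a union of several intervals. The paper handles this by rewriting
\[
\Delta(\varepsilon;k)=\frac{\delta}{D_1(\varepsilon;k)+D_2(\varepsilon;k)}
\]
and computing second derivatives to show that $D_1+D_2$ is strictly convex on $(0,1)$; strict convexity forces a unique interior minimum of $D_1+D_2$, hence a unique interior maximum of $\Delta$, hence a single band. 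If Proposition~\ref{general stability} is read purely as an existence claim your argument survives (take any subinterval on which $\Delta$ exceeds the threshold), but the well-definedness of the endpoints is exactly what Proposition~\ref{prop_general equilibrium selection} orders, so the convexity step is not dispensable.

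Second, your ``main obstacle'' is self-inflicted. Lemmas~2 and~3 in Appendix~A do not use the $\delta\to 1$ limit at all: Lemma~2 shows, for every $\delta$, that the sign of $\myVs{T_k}-\myVs{T_{k'}}$ with $k<k'<n$ coincides with the sign of $\myVs{T_k}-\myVs{T_n}$; Lemma~3 uses only $b/n-c<0$. The paper's proof simply notes that only Lemma~1 depended on the limit and replaces that lemma alone. Consequently, once the defector comparison holds on $(\underline{\varepsilon}_k,\overline{\varepsilon}_k)$ at the given $\delta<1$, both cooperator comparisons hold on the \emph{same} interval automatically --- no continuity-and-compactness extension from $\delta=1$, no shrinking of the band, and no compatibility check between the defector infimum and any separate small-$\varepsilon$ region.
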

\begin{proof}
See Appendix C. 
\end{proof}

\begin{prop}\label{prop_general equilibrium selection}
In an n-person public goods game where conditional cooperators have a $T_k$ strategy $(k\in \{1, 2, \cdots, n-1\})$, 
both $\underline{\varepsilon}_k$ and $\overline{\varepsilon}_k$ increase as $k$ decreases and $\underline{\varepsilon}_{n-1}=0$, that is, $0=\underline{\varepsilon}_{n-1} < \underline{\varepsilon}_{n-2} < \cdots<\underline{\varepsilon}_1$ and $0<\overline{\varepsilon}_{n-1} < \overline{\varepsilon}_{n-2} < \cdots<\overline{\varepsilon}_1 < 1$, when the probability of game repetition is sufficiently high. 
\end{prop}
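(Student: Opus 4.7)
The plan is to reduce the proposition to an analysis of the function $\Delta(\varepsilon; k)$ introduced in Section \ref{section t k stability}, which captures the binding evolutionary-stability condition of $T_k$ against the defector strategy $T_n$. A direct computation gives $F^\varepsilon(C|n-1) = (1-\varepsilon)(b-c)$ and $F^\varepsilon(D|n-1) = (1-\varepsilon)b(n-1)/n$, so their ratio is independent of $\varepsilon$, and the stability threshold $1 - F^\varepsilon(C|n-1)/F^\varepsilon(D|n-1)$ simplifies to the payoff-structure constant $T := (nc-b)/(b(n-1))$, which lies in $(0,1)$ by the assumptions $b/n < c < b$. Following the same three-step pattern as Appendix A (Lemmas 1--3) and Appendix C, the $T_n$ comparison is the binding one on both ends of the stability interval, so $\underline{\varepsilon}_k$ and $\overline{\varepsilon}_k$ are precisely the infimum and supremum of $\{\varepsilon \in (0,1) : \Delta(\varepsilon; k) > T\}$; I would verify this by showing that the $T_{k'}$ comparisons of Lemmas 2--3 are slack throughout this set.

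Next I evaluate $\Delta$ at the boundary of $(0,1)$. Writing
\begin{equation*}
\Delta(\varepsilon; k) = \frac{\delta (1-\varepsilon)\, \psi(n-1, n-k-1, \varepsilon)}{(1-\delta) + \delta \sum_{q=n-k-1}^{n-1} \psi(n-1, q, \varepsilon)},
\end{equation*}
I read off $\Delta(0;k) = 0 = \Delta(1;k)$ for $k \leq n-2$: the factor $\varepsilon^{n-k-1}$ kills the numerator at $\varepsilon = 0$ while the denominator tends to $1-\delta > 0$, and $(1-\varepsilon)^{k+1}$ kills it at $\varepsilon = 1$. For $k = n-1$ the empty-sum convention yields $\Delta(\varepsilon; n-1) = \delta(1-\varepsilon)^n$, so $\Delta(0; n-1) = \delta$. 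Since $T < 1$, choosing $\delta > T$ (automatic when $\delta$ is close to $1$) gives $\underline{\varepsilon}_{n-1} = 0$ immediately, with $\overline{\varepsilon}_{n-1}$ the unique root of $\delta(1-\varepsilon)^n = T$ in $(0,1)$. For $k \leq n-2$, continuity plus the vanishing boundary values force $\underline{\varepsilon}_k > 0$ and $\overline{\varepsilon}_k < 1$ whenever the set $\{\varepsilon : \Delta(\varepsilon;k) > T\}$ is non-empty, and non-emptiness follows, uniformly over the finite range of $k$, once $\delta$ is sufficiently close to $1$.

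The core of the argument is the monotonicity $\underline{\varepsilon}_{k-1} > \underline{\varepsilon}_k$ and $\overline{\varepsilon}_{k-1} > \overline{\varepsilon}_k$. I form the ratio
\begin{equation*}
\frac{\Delta(\varepsilon; k-1)}{\Delta(\varepsilon; k)} = \frac{k}{n-k} \cdot \frac{\varepsilon}{1-\varepsilon} \cdot \frac{(1-\delta) + \delta \sum_{q \geq n-k-1} \psi(n-1, q, \varepsilon)}{(1-\delta) + \delta \sum_{q \geq n-k} \psi(n-1, q, \varepsilon)},
\end{equation*}
using the binomial identity $\binom{n-1}{n-k}/\binom{n-1}{n-k-1} = k/(n-k)$ and the telescoping $\sum_{q \geq n-k-1} - \sum_{q \geq n-k} = \psi(n-1, n-k-1, \varepsilon)$. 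The first factor is strictly increasing from $0$ to $+\infty$ on $(0,1)$ while the denominator-ratio stays bounded, producing a unique crossover $\varepsilon^*(k)$ at which $\Delta(\cdot; k-1) = \Delta(\cdot; k)$, with $\Delta(\cdot; k-1) < \Delta(\cdot; k)$ below $\varepsilon^*$ and the reverse inequality above. The endpoint monotonicity then reduces to the single containment $\varepsilon^*(k) \in (\underline{\varepsilon}_k, \overline{\varepsilon}_k)$: this would give $\Delta(\underline{\varepsilon}_k; k-1) < T = \Delta(\underline{\varepsilon}_k; k)$ and $\Delta(\overline{\varepsilon}_k; k-1) > T = \Delta(\overline{\varepsilon}_k; k)$, forcing $\underline{\varepsilon}_{k-1} > \underline{\varepsilon}_k$ and $\overline{\varepsilon}_{k-1} > \overline{\varepsilon}_k$; induction on $k$ then yields the full chain.

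The main obstacle is securing the containment $\varepsilon^*(k) \in (\underline{\varepsilon}_k, \overline{\varepsilon}_k)$, i.e., proving $\Delta(\varepsilon^*(k); k) > T$. Intuitively, as $\delta \to 1$ the function $\Delta(\cdot; k)$ is bounded below by a quantity of order $1$ throughout the central portion of its support, while $T$ is a fixed payoff-structure constant strictly less than $1$, so the crossover should fall comfortably inside the above-$T$ region. Making this quantitative---by substituting the defining equation $\frac{k}{n-k}\cdot\frac{\varepsilon^*}{1-\varepsilon^*} \cdot (\text{denom.\ ratio}) = 1$ into $\Delta$ and bounding the result from below uniformly over $k \in \{1,\ldots,n-1\}$ for $\delta$ close enough to $1$---is the technical heart of the argument and determines how close $\delta$ needs to be to $1$.
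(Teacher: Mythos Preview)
Your high-level strategy---reduce to the threshold condition $\Delta(\varepsilon;k)>T$ with $T=1-F^\varepsilon(C|n-1)/F^\varepsilon(D|n-1)$ constant, locate where the curves $\Delta(\cdot;k-1)$ and $\Delta(\cdot;k)$ cross, and read off the ordering of the band endpoints---is exactly the paper's route, and your boundary analysis (including $\underline{\varepsilon}_{n-1}=0$ from $\Delta(\varepsilon;n-1)=\delta(1-\varepsilon)^n$) is correct. The two places you flag as sketchy, however, are precisely where the paper's proof carries its content, and your sketch does not close either of them.

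First, uniqueness of the crossover does \emph{not} follow from the observation that $\tfrac{k}{n-k}\cdot\tfrac{\varepsilon}{1-\varepsilon}$ runs monotonically from $0$ to $\infty$ while the denominator ratio stays bounded: the product of a monotone function with a bounded positive one need not be monotone, so multiple crossings are not ruled out. The paper avoids the ratio altogether and works with the reciprocal decomposition $\Delta(\varepsilon;k)=\delta/\bigl(D_1(\varepsilon;k)+D_2(\varepsilon;k)\bigr)$ from Appendix~C. A direct differentiation (the paper's Lemma~\ref{lem_prop_6_1}) shows $D_1'+D_2'$ is strictly increasing in the index, so the difference $d(\varepsilon)=[D_1+D_2](\varepsilon;k-1)-[D_1+D_2](\varepsilon;k)$ is strictly monotone and hence has exactly one zero.

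Second---and this is the idea you are missing---the paper proves an identity (Lemma~\ref{lem_prop_6_2}) locating that zero: the two curves meet exactly at $\varepsilon_{k-1}^*:=\arg\min_\varepsilon [D_1+D_2](\varepsilon;k-1)$, i.e., at the \emph{peak} of $\Delta(\cdot;k-1)$. Together with the strict convexity of $D_1+D_2$ (Lemma~\ref{lem_prop_5_1}), which makes every $\Delta(\cdot;k)$ single-peaked, this dissolves your ``main obstacle'': at the crossover one has $\Delta(\varepsilon^*;k)=\Delta(\varepsilon^*;k-1)=\max_\varepsilon \Delta(\varepsilon;k-1)>T$ as soon as the $T_{k-1}$ band is non-empty, so the containment $\varepsilon^*\in(\underline{\varepsilon}_k,\overline{\varepsilon}_k)$ is automatic, with no further tightening of $\delta$ beyond what Proposition~\ref{general stability} already requires. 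Your proposed route---bounding $\Delta$ from below ``throughout the central portion of its support'' and then arguing the crossover lies there---would need an independent quantitative argument and is considerably less clean than this structural identity.
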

\begin{proof}
See Appendix D. 
\end{proof}

Figure \ref{figure} makes our argument clear. All of the above propositions show that the stability condition critically depends on the sign of $\Delta(\varepsilon; k)-(1-\frac{\myFs{C}}{\myFs{D}})$. This means that $\Delta(\varepsilon;k)$ should be sufficiently large to offset  $1-\frac{\myFs{C}}{\myFs{D}}$. The three panels in Figure \ref{figure} show $\Delta(\varepsilon; k)$ for three different values of $\delta$. One can easily check the following. First, the $\Delta(\varepsilon; k)$ value becomes larger as $\delta$ increases, and our previous propositions hold if $\delta$ approaches 1. Second, when $\delta$ is less than one, the probability of mistakes should be sufficiently large (i.e., the infimum is important as well as its supremum to support cooperation) and the infimum becomes zero when $\delta$ approaches 1.

\section{Conclusion}

This study examined the effect of behavioral mistakes on the dynamic stability of the cooperative equilibrium in a repeated public goods game. This study shows that while a behavioral mistake has detrimental effects on cooperation because it reduces the length of the period of mutual cooperation by triggering the conditional cooperators' retaliation, a behavioral mistake also has a positive effect by making the conditional cooperative strategies evolutionarily stable. This paper shows that the behavioral mistakes stabilize the cooperative equilibrium based on the hardest cooperative strategy by eliminating the behavioral indistinguishability between the conditional cooperative strategies at the cooperative equilibrium. This paper shows that the mistakes also stabilize the cooperative equilibrium based on a softer cooperative strategy by producing unintended defection among the cooperators and making the softer conditional cooperators less tolerant of a defector's invasion. Finally, the paper shows that the error rate, $\varepsilon$, could serve as an equilibrium selection criterion because each equilibrium that is based on a different level of tolerance is supported by different ranges for the error rate. 

\appendix


\section{The Proof of Proposition 3}\label{proof_Prop 3}
Consider a population that is entirely composed of conditional cooperators who have the same hardness (where $k<n-1$). The payoff to the $T_k$ strategy when $p_k=1$ is
\small{}
\begin{eqnarray}
W(T_k|p_k=1)&=&V^\varepsilon(T_k|n-1)m(n-1,p_k=1)\nonumber\\ 
&=&F^\varepsilon(C|n-1)\nonumber\\
&&+\delta \left[\sum_{q=0}^{n-k-2}\psi(n-1, q, \varepsilon)+(1-\varepsilon)\psi(n-1, n-k-1,\varepsilon)\right]V^\varepsilon(T_k|n-1),\nonumber\\
\end{eqnarray}
\normalsize
which gives
\small{}
\begin{eqnarray}
W(T_k|p_k=1)&=&V^\varepsilon(T_k|n-1)\nonumber\\
&=&\frac{F^\varepsilon(C|n-1)}{1-\delta\sum_{q=0}^{n-k-2}\psi(n-1, q, \varepsilon)-\delta(1-\varepsilon)\psi(n-1, n-k-1,\varepsilon)},\nonumber\\
\end{eqnarray}
\normalsize{}because $m(n-1, p_k=1)=1$. The payoff to the mutant $T_n$ strategy is
\small{}
\begin{eqnarray}
W(T_n|p_k=1)&=&V^\varepsilon(T_n|n-1)m(n-1,p_k=1)\nonumber\\ 
&=&F^\varepsilon(D|n-1)+\delta\sum_{q=0}^{n-k-2}\psi(n-1, q, \varepsilon) V^\varepsilon(T_n|n-1),
\end{eqnarray}
\normalsize{}which gives
\small{}
\begin{eqnarray}
W(T_n|p_k=1)=V^\varepsilon(T_n|n-1)=\frac{F^\varepsilon(D|n-1)}{1-\delta\sum_{q=0}^{n-k-2} \psi(n-1, q, \varepsilon)}.
\end{eqnarray}
\normalsize

Now, the following three lemmas show that any $T_k$ (for $ 0<k<  n-1)$ is evolutionarily stable for a sufficiently large $\delta$ close to $1$ and a sufficiently small $\varepsilon \in (0,1)$. The proof is presented at the end.

\begin{lemma}\label{lem:one}
For a sufficiently large $\delta$ close to $1$, there exists a unique $\varepsilon_k$ such that $\myVs{T_k}-\myVs{T_n}>0$ for $ \varepsilon \in (0,\varepsilon_k)$.
\end{lemma}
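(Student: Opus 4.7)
The plan is to exploit the factorization derived immediately before the lemma,
\[
V^\varepsilon(T_k|n-1) - V^\varepsilon(T_n|n-1) \;=\; A(\varepsilon,\delta)\,\Bigl[\Delta(\varepsilon;k) - \Bigl(1-\tfrac{F^\varepsilon(C|n-1)}{F^\varepsilon(D|n-1)}\Bigr)\Bigr],
\]
where $A(\varepsilon,\delta) = F^\varepsilon(D|n-1)/\bigl[1-\delta\sum_{q=0}^{n-k-2}\psi(n-1,q,\varepsilon)-\delta(1-\varepsilon)\psi(n-1,n-k-1,\varepsilon)\bigr]$. The denominator of $A$ at $\delta=1$ rearranges to $\varepsilon\,\psi(n-1,n-k-1,\varepsilon)+\sum_{q=n-k}^{n-1}\psi(n-1,q,\varepsilon)$, which is strictly positive for $\varepsilon\in(0,1)$; together with $F^\varepsilon(D|n-1)>0$, this makes $A$ strictly positive (and the same holds by continuity for $\delta$ just below $1$). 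Hence the sign of the difference reduces to comparing $\Delta(\varepsilon;k)$ against the horizontal threshold $T:=1-F^\varepsilon(C|n-1)/F^\varepsilon(D|n-1)$.

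Next I would check that $T$ does not depend on $\varepsilon$. The one-shot payoffs simplify to $F^\varepsilon(C|n-1)=(1-\varepsilon)(b-c)$ and $F^\varepsilon(D|n-1)=(1-\varepsilon)b(n-1)/n$; the common factor $1-\varepsilon$ cancels, leaving $T=(nc-b)/((n-1)b)$, a fixed constant in $(0,1)$ under the standing assumption $b/n<c<b$. The lemma is thereby reduced to the analytic claim that, at $\delta=1$, $\Delta(\,\cdot\,;k)\colon (0,1)\to(0,1)$ is strictly decreasing with $\lim_{\varepsilon\to 0^+}\Delta=1$ and $\lim_{\varepsilon\to 1^-}\Delta=0$. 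Given that, the intermediate value theorem supplies a unique $\varepsilon_k\in(0,1)$ with $\Delta(\varepsilon_k;k)=T$, and strict monotonicity immediately yields $\Delta(\varepsilon;k)>T$ exactly on $(0,\varepsilon_k)$, which is the asserted sign.

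The monotonicity is the main obstacle, and I would handle it by the odds-ratio substitution $y:=\varepsilon/(1-\varepsilon)$, which is strictly increasing on $[0,1)$ and satisfies $1-\varepsilon=1/(1+y)$. Dividing numerator and denominator of $\Delta$ by $\varepsilon^{n-k-1}(1-\varepsilon)^{k+1}$ gives
\[
\Delta(\varepsilon;k) \;=\; \frac{\binom{n-1}{n-k-1}}{(1+y)\sum_{j=0}^{k}\binom{n-1}{n-k-1+j}y^{j}}.
\]
A termwise application of Pascal's identity $\binom{n-1}{i}+\binom{n-1}{i-1}=\binom{n}{i}$ collapses $(1+y)\sum_{j=0}^{k}\binom{n-1}{n-k-1+j}y^{j}$ to $\binom{n-1}{n-k-1}+\sum_{j=1}^{k+1}\binom{n}{n-k-1+j}y^{j}$, so
\[
\Delta(\varepsilon;k) \;=\; \frac{1}{1+\sum_{j=1}^{k+1}\binom{n}{n-k-1+j}y^{j}\big/\binom{n-1}{n-k-1}}.
\]
Every coefficient in the sum is strictly positive, so the denominator is strictly increasing in $y$ and $\Delta$ is strictly decreasing in $\varepsilon$ on $[0,1)$; evaluating at the endpoints gives $\Delta=1$ at $y=0$ and $\Delta\to 0$ as $y\to\infty$, which closes the proof at $\delta=1$. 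The extension to $\delta$ slightly below $1$ is a direct continuity argument: $\Delta$ and $T$ are jointly continuous in $(\varepsilon,\delta)$, and the strict inequality together with the unique zero persists in a neighborhood of $\delta=1$ on any compact $\varepsilon$-interval bounded away from $0$.
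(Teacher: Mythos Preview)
Your proof is correct and follows essentially the same route as the paper: reduce the sign of $V^\varepsilon(T_k|n-1)-V^\varepsilon(T_n|n-1)$ to $\Delta(\varepsilon;k)$ crossing a fixed threshold in $(0,1)$, pass to $\delta\to 1$, rewrite $\Delta$ via the odds ratio $y=\varepsilon/(1-\varepsilon)$, and argue strict monotonicity with endpoint limits $1$ and $0$. The only substantive difference is in the monotonicity step: the paper differentiates the odds-ratio expression for $\Delta$ directly and inspects signs, whereas you telescope the denominator with Pascal's identity to obtain $\Delta=\bigl[1+\sum_{j\ge 1}c_j y^{j}\bigr]^{-1}$ with all $c_j>0$, from which monotonicity is immediate. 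Your version is cleaner and also makes explicit two points the paper leaves implicit here, namely that the prefactor $A(\varepsilon,\delta)$ is strictly positive and that $1-F^\varepsilon(C|n-1)/F^\varepsilon(D|n-1)=(nc-b)/((n-1)b)$ is actually constant in $\varepsilon$.
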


\begin{proof}
The sign of $\myVs{T_k}-\myVs{T_n}$ is determined by $\Delta(\varepsilon;k)-(1-\frac{F^\varepsilon(C|n-1)}{F^\varepsilon(D|n-1)})$
where $\Delta(\varepsilon;k)=\frac{\delta(1-\varepsilon)\myPsi{n-k-1}}{1-\delta\sum_{q=0}^{n-k-2}\myPsi{q}}$. When $\delta$ is close to 1, we have
\small{}
\begin{align}
\lim_{\delta\rightarrow 1} \Delta(\varepsilon;k)=&\frac{(1-\varepsilon)\myPsi{n-k-1}}{1-\sum_{q=0}^{n-k-2}\myPsi{q}}\nonumber\\
=&\frac{(1-\varepsilon)\psi(n-1, n-k-1, \varepsilon)}{\sum_{q=n-k-1}^{n-1}\psi(n-1, q, \varepsilon)}\nonumber\\
=& \frac{(1-\varepsilon)\binom{n-1}{n-k-1}}{\sum_{q=0}^k \binom{n-1}{n-k-1+q}\left(\frac{\varepsilon}{1-\varepsilon}\right)^q}
\end{align}
\normalsize{}and 
\small{}
\begin{align*}
&\lim_{\varepsilon\rightarrow 0}\lim_{\delta\rightarrow 1} \Delta(\varepsilon;k)=1 \\
&\lim_{\varepsilon\rightarrow 1}\lim_{\delta\rightarrow 1} \Delta(\varepsilon;k)=0.
\end{align*}
\normalsize{}Because $0<1-\frac{F^\varepsilon(C|n-1)}{F^\varepsilon(D|n-1)}<1$, 
\small{}
\begin{align*}
\begin{cases}
\myVs{T_k}-\myVs{T_n}>0  & \text{if $\varepsilon \to 0$}\\
\myVs{T_k}-\myVs{T_n}<0 & \text{if $\varepsilon \to 1$}.
\end{cases}
\end{align*}
\normalsize{}Because $V^\varepsilon(\cdot)$ is continuous for $\varepsilon \in (0,1)$, then there exists at least one $\varepsilon$ that makes $\myVs{T_k} = \myVs{T_n}$. Now, to prove the uniqueness of this $\varepsilon$, $\varepsilon_k$, we take first derivative of $\lim_{\delta\rightarrow 1}\Delta(\varepsilon, k)$:
\small{}
\begin{align*}
\hspace{-0.5em}
&\frac{\mathrm{d} \lim_{\delta\rightarrow 1}\Delta}{\mathrm{d} \varepsilon} = \\
&\frac{ -\binom{n-1}{n-k-1}\sum_{q=0}^{k} \binom{n-1}{n-k-1+q} \left( \frac{\varepsilon}{1-\varepsilon} \right)^q-(1-\varepsilon)\binom{n-1}{n-k-1} \sum_{q=0}^{k} q \binom{n-1}{n-k-1+q} \left( \frac{1}{(1-\varepsilon)^2} \right)^{q-1}}{\left( \sum_{q=0}^{k} \binom{n-1}{n-k-1+q} \left( \frac{\varepsilon}{1-\varepsilon} \right)^q \right)^2}
\end{align*}
\normalsize
which is always negative for $\varepsilon \in (0,1)$.
\end{proof}

\begin{lemma}
Suppose that a mutant $T_{k'}$ \upshape{(}\itshape for $k' \in \{k+1,\ldots,n-1 \}$ \upshape{)} \itshape  appears in a homogeneous population of $T_k$.  $\myVs{T_{k}}-\myVs{T_{{k'}}}$ has the same sign as $\myVs{T_k}-\myVs{T_n}$. 
\end{lemma}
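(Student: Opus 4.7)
The plan is to write down a closed-form expression for $V^\varepsilon(T_{k'}|n-1)$ by setting up a recursion analogous to the ones for $V^\varepsilon(T_k|n-1)$ and $V^\varepsilon(T_n|n-1)$ that were derived earlier in the proof of Proposition 3, and then to algebraically factor the difference $V^\varepsilon(T_k|n-1)-V^\varepsilon(T_{k'}|n-1)$ as a strictly positive multiple of the same bracketed quantity $\Delta(\varepsilon;k) - \bigl(1 - F^\varepsilon(C|n-1)/F^\varepsilon(D|n-1)\bigr)$ whose sign governs Lemma 1. Once that factorization is in hand, the sign equivalence claimed in Lemma 2 is immediate.

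First I would condition on the round-one mistake outcomes: let $e\in\{0,1\}$ be the mutant's own mistake indicator and $q$ the number of mistakes among the $n-1$ incumbent $T_k$ players. Because $k'>k$ we have $n-1-k'<n-1-k$, so the mutant's own cooperation threshold is strictly tighter than the binding threshold of a non-mistaking $T_k$. This splits round-one outcomes into three regimes: (I) $q\le n-1-k'$, where every player intends $C$ next round and the full mutual-cooperation state recurs; (II) $q\ge n-k'$ together with $q+e\le n-1-k$, where the mutant has crossed its own threshold and defects while the $T_k$'s still remain within their tolerance; and (III) $q+e>n-1-k$, where the cooperative phase collapses. The state reached in regime (II) is identical to that of a $T_n$ mutant in a $T_k$ population, so the continuation value there is exactly $V^\varepsilon(T_n|n-1)$.

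Solving the resulting linear recursion gives an expression of the form
\begin{equation*}
V^\varepsilon(T_{k'}|n-1) \;=\; \frac{F^\varepsilon(C|n-1) + \delta\,A(\varepsilon)\,V^\varepsilon(T_n|n-1)}{1 - \delta\,B(\varepsilon)},
\end{equation*}
with $B(\varepsilon)=\sum_{q=0}^{n-1-k'}\psi(n-1,q,\varepsilon)$ and $A(\varepsilon)$ collecting the regime (II) probabilities. Substituting the closed forms of $V^\varepsilon(T_k|n-1)$ and $V^\varepsilon(T_n|n-1)$ already established in the excerpt, and clearing common denominators, the target identity should come out as
\begin{equation*}
V^\varepsilon(T_k|n-1) - V^\varepsilon(T_{k'}|n-1) \;=\; \gamma(\varepsilon,\delta;k,k')\,\Bigl[\Delta(\varepsilon;k) - \Bigl(1 - \tfrac{F^\varepsilon(C|n-1)}{F^\varepsilon(D|n-1)}\Bigr)\Bigr],
\end{equation*}
with $\gamma>0$ for $\varepsilon\in(0,1)$ and $\delta$ sufficiently close to $1$. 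Since Lemma 1 already identifies the same bracket with the sign of $V^\varepsilon(T_k|n-1)-V^\varepsilon(T_n|n-1)$, Lemma 2 follows.

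The hard part will be justifying the regime (II) continuation value. Strictly speaking, once the mutant crosses into defection, a later round in which the $T_k$'s happen to have very few mistakes could in principle pull the mutant back above its cooperation threshold $n-1-k'$, so the recursion has more recurrent states than the two-state sketch above. I would handle this either by an explicit enumeration of those return transitions and verifying that each contributes only further positive multiples of the same sign-determining bracket, or by a continuity argument in the double limit $\varepsilon\to 0$, $\delta\to 1$ already used in Lemma 1, which suppresses the return-transition terms to lower order. In either route the bracket in which the sign lives is the same as in Lemma 1, so the conclusion is preserved.
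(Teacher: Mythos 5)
Your decomposition is essentially the paper's: the same three first-round regimes, the same identification of the regime-(II) continuation with $\myVs{T_n}$, and the same plan of reducing the sign question to the bracket from Lemma 1. Two caveats are worth recording. First, the exact factorization you anticipate does not materialize: what the algebra actually delivers is $\myVs{T_k}-\myVs{T_{k'}}=\gamma\cdot\bigl[\tfrac{\myFs{C}}{\myFs{D}}-(**)\bigr]$ with $\gamma>0$, where $(**)$ is a ratio of transition probabilities satisfying only the one-sided bound $(**)\leq 1-\Delta(\varepsilon;k)$. Hence one gets the implication $\myVs{T_k}-\myVs{T_n}>0\Rightarrow\myVs{T_k}-\myVs{T_{k'}}>0$ rather than a genuine two-way sign equivalence; that is also all the paper's proof establishes, and all that Proposition 3 uses, but you should not expect the difference to be an exact positive multiple of the Lemma 1 bracket. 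Second, the return-to-cooperation issue you flag is real: since a $T_{k'}$ player reacts only to the previous round, after crossing its threshold it would resume cooperating whenever at least $k'$ of the still-cooperating $T_k$ incumbents avoid mistakes, so the process has more recurrent states than the two-state recursion admits. The paper silently suppresses this by assigning the absorbing continuation $\myVs{T_n}$ to regime (II), so on this point you are being more careful than the source; making the argument literally faithful to the strategy definition requires either the extra return transitions you describe (with a check that they do not flip the sign) or an explicit convention that a triggered player defects for the remainder of the game.
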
 

\begin{proof}
The payoff for strategy $T_{a}$ from the repeated game is given by  
\small{}
\begin{align*}
\myVs{T_{k'}} = &\myFs{C}+ \delta
\begin{cases}
\sum_{q=0}^{n-k'-1}\myPsi{q} \cdot \myVs{T_{k'}} + ~ \\
\sum_{q=n-k'}^{n-k-2}\myPsi{q} \cdot \myVs{T_{n}} + ~ \\
\sum_{q=n-k-1}^{n-1}\myPsi{q} \cdot 0.
\end{cases}
\end{align*}
\normalsize{}To check the stability, it is noted that 
\footnotesize{}
\begin{align*}
&\myV{T_k} -\myV{T_{k'}} = \\
& \frac{ (\delta \sum_{q=n-k'}^{n-k-2}\myPsi{q} + \delta(1-\varepsilon)\myPsi{n-k'-1} ) \myF{D} }{(1-\delta\sum_{q=0}^{n-k-2}\myPsi{q}-\delta(1-\varepsilon)\myPsi{n-k-1}) (1-\delta \sum_{q=0}^{n-k'-1} \myPsi{q} ) } \times \\
& \left[
\frac{F^\varepsilon(C|n-1)}{F^\varepsilon(D|n-1)}-\underbrace{ \frac{\delta \sum_{q=n-k'}^{n-k-2} \myPsi{q} \{ 1-\delta \sum_{q=0}^{n-k-2}\myPsi{q}-\delta(1-\varepsilon)\myPsi{n-k-1} \} }{ (1-\delta \sum_{q=0}^{n-k-2}\myPsi{q}) \{ \delta \sum_{q=n-k'}^{n-k-2} \myPsi{q} + \delta (1-\varepsilon)\myPsi{n-k-1} \} }%
}_{(**)}
\right],
\end{align*}
\normalsize{}and 
\small{}
\begin{align*}
(**) \leq \left[ 1-\dfrac{\delta (1-\varepsilon) \myPsi{n-k-1} }{1-\delta\sum_{q=0}^{n-k-2} \myPsi{q} } \right].
\end{align*}
\normalsize{}The RHS of the last inequality is the exactly the same condition for $\myVs{T_k}-\myVs{T_n}>0$, which completes the proof.
\end{proof}

\begin{lemma}
When a mutant $T_{k'}$ \upshape{(}\itshape for $k' <k$\upshape{)} \itshape appears in the homogeneous population of $T_k$, $\myVs{T_k}-\myVs{T_{k'}}>0$.
\end{lemma}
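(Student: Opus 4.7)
The plan is to derive a Bellman-style recursion for $\myVs{T_{k'}}$ paralleling Eq~(\ref{eq:Ta}), observe that the coefficient of $\myVs{T_{k'}}$ on the right-hand side coincides with the continuation coefficient in $\myVs{T_k}$, and identify an extra strictly negative ``sucker'' term that forces the mutant's payoff below the incumbent's.

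Let $X\in\{0,1\}$ indicate the mutant's own round-1 mistake and let $q$ denote the number of mistakes among the $n-1$ $T_k$ incumbents. Both parties intend $C$ in round~1, so the round-1 payoff is $\myFs{C}$. Following the same convention as Eq~(\ref{eq:Ta}), the $T_k$ incumbents continue cooperating in round~2 precisely when $X+q\leq n-k-1$; otherwise the cooperative stream collapses for them. Because $k'<k$ gives $n-k-1\leq n-k'-1$, the mutant's own cooperation threshold ($q\leq n-k'-1$) is slacker than the incumbents', so in every continuation scenario the mutant cooperates as well and the state resets to the initial one. In the retaliation regime, the mutant's behavior splits: if $q\leq n-k'-1$ it still cooperates in round~2 and collects the sucker payoff $F^\varepsilon(C|0)=(1-\varepsilon)(b/n-c)$, yielding a strictly negative continuation value $S_{k'}<0$ (eventually defecting for $k'\geq 1$, or persisting as a lone cooperator for $k'=0$); if $q\geq n-k'$ it also defects in round~2, with continuation value~$0$.

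Assembling the cases gives
\[
\myVs{T_{k'}}=\myFs{C}+\delta\Bigl[\sum_{q=0}^{n-k-2}\myPsi{q}+(1-\varepsilon)\myPsi{n-k-1}\Bigr]\myVs{T_{k'}}+\delta P_S\,S_{k'},
\]
where $P_S=\Pr(X+q\geq n-k\text{ and }q\leq n-k'-1)$. The bracketed coefficient is exactly the one multiplying $\myVs{T_k}$ in Eq~(\ref{eq:Ta}); this coincidence is the crux of the argument. Solving both recursions and subtracting yields
\[
\myVs{T_k}-\myVs{T_{k'}}=\frac{-\delta P_S\,S_{k'}}{1-\delta\sum_{q=0}^{n-k-2}\myPsi{q}-\delta(1-\varepsilon)\myPsi{n-k-1}}>0,
\]
since $S_{k'}<0$ (using $b/n<c$), $P_S\geq \varepsilon\,\myPsi{n-k-1}>0$ for $\varepsilon\in(0,1)$ (the event $X=1,\ q=n-k-1$ lies in the sucker set because $n-k-1\leq n-k'-1$), and the denominator is positive.

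The main obstacle is the boundary case $X+q=n-k$, in which the incumbents' round-2 reactions are actually heterogeneous: non-mistakers see $n-k$ defections and retaliate, while mistakers see only $n-k-1$ and would still cooperate. I would adopt the same simplification that underlies Eq~(\ref{eq:Ta}) and the proofs of Lemmas~\ref{lem:one} and~2, namely that any state in which the non-mistakers retaliate is treated as a full breakdown. Under that convention the continuation coefficients of $\myVs{T_k}$ and $\myVs{T_{k'}}$ match exactly, and the strict inequality reduces to the sign of the single sucker term.
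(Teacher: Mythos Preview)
Your argument is correct and is essentially a fully worked-out version of the paper's own proof, which consists of a single sentence pointing to $\frac{b}{n}-c<0$ and the analogous calculation in Proposition~\ref{prop t n-1}. Both rely on the same mechanism: the more tolerant mutant $T_{k'}$ sometimes keeps cooperating after the $T_k$ incumbents have withdrawn, collecting the strictly negative sucker payoff, and you have simply made explicit the matching continuation coefficients and the positive-probability sucker event that the paper leaves to the reader.
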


\begin{proof}
It is easy to see that the $T_{k}$ is strictly dominant against a single mutant $T_{k'}$ for all parameters because $\frac{b}{n}-c < 0$ (See also the proof of Proposition 2).
\end{proof}

The proof for Proposition 3 is as follows:
\begin{proof}
According to Lemma 1, there exists an $\varepsilon_k$ such that $\myVs{T_k}-\myVs{T_n}> 0$ if $\varepsilon < \varepsilon_k$ and  $\myVs{T_k}-\myVs{T_n} \leq 0$ otherwise. According to Lemma 2, $\myVs{T_k}-\myVs{T_{k'}}>0$ for $k<k'<n$ if $\varepsilon$ satisfies $\myVs{T_k}-\myVs{T_n}\geq 0$. Lastly, according to Lemma 3, $\myVs{T_k}$ is always greater than $\myVs{T_{k'}}$ if $k'<k$. These three lemmas lead to the conclusion that $T_k$ is evolutionarily stable if $\varepsilon \in (0, \varepsilon_k)$.
\end{proof}

\section{The Proof of Proposition 4}

\begin{proof}
According to Lemma 1, because $\frac{F^\varepsilon(C|n-1)}{F^\varepsilon(D|n-1)}-1$ is fixed, $\Delta(\varepsilon;k)$ determines $\myVs{T_{k}} -\myVs{T_n}$. Thus, when $\Delta(\varepsilon;k) > \Delta(\varepsilon;k+1)$ holds for $\varepsilon \in (0,1)$, the proof is completed. 
\small{}
\begin{align*}
\Delta(\varepsilon;k+1)-&\Delta(\varepsilon;k) = \\
& (1-\varepsilon) \underbrace{\left[%
\frac{\myPsi{n-k-2}}{1-\sum_{q=0}^{n-k-3}\myPsi{q}}-\frac{\myPsi{n-k-1}}{1-\sum_{q=0}^{n-k-2}\myPsi{q} }%
\right]}_{(***)}
\end{align*}
\normalsize{}$(***)$ can be rearranged to
\small{}
\begin{align*}
\frac{1}{1 + \frac{ \binom{n-1}{n-k-1} }{\binom{n-1}{n-k-2}}+\cdots+\frac{ \binom{n-1}{n-1} }{\binom{n-1}{n-k-2}}}
-
\frac{1}{1 + \frac{ \binom{n-1}{n-k} }{\binom{n-1}{n-k-1}}+\cdots+\frac{ \binom{n-1}{n-1} }{\binom{n-1}{n-k-1}}}.
\end{align*}
\normalsize{}It is easy to check that the denominator of the first part is always larger than that of the second, which confirms that $\Delta(\varepsilon;k) > \Delta(\varepsilon;k+1)$. In other words, the $T_n$ strategy can always invade a population that is entirely composed of individuals using $T_{k+1}$ strategy if $\varepsilon=\varepsilon_k$.  Following the previous lemmas, there should be an $\varepsilon_{k+1}$ such that the $T_{k+1}$ strategy is evolutionarily stable for $\varepsilon \in (0,\varepsilon_{k+1})$. $\myVs{T_{k + 1}}-\myVs{T_n} > 0$ for $\varepsilon \in (0, \varepsilon_{k+1})$ and $\myVs{T_{k + 1}}-\myVs{T_n} < 0$ for $\varepsilon=\varepsilon_k$, which implies that $\varepsilon_{k+1}< \varepsilon_k$ for  $k \in \{ 1, \ldots, n-2 \}$.
\end{proof}

\section{The Proof of Proposition 5}

In the proof of Proposition \ref{prop t k} (Appendix A), only Lemma 1 needs the necessary condition that $\delta$ is at the limit of $1$. We will show that $\myVs{T_k}-\myVs{T_n} > 0$ for $\varepsilon \in (\underline{\varepsilon_k},\overline{\varepsilon_k})$ as long as $\delta$ is sufficiently high such that $\frac{\myF{C}}{\myF{D}}-1+\Delta(\varepsilon;k)>0$ but is not close to $1$. We will provide three lemmas and present the proof of the proposition at the end. 

First of all, $\Delta(\varepsilon;k)$ is rewritten as 
\small{}
\begin{align*}
\Delta(\varepsilon;k)=\frac{\delta(1-\varepsilon)\myPsi{n-k-1} }{1-\delta \sum_{q=0}^{n-k-2}\myPsi{q} } =\frac{ \delta }{D_1(\varepsilon;k)+D_2(\varepsilon;k)},
\end{align*}
\normalsize{}where 
\small{}
\begin{align*}
D_1(\varepsilon;k)=\frac{1-\delta}{\binom{n-1}{n-k-1}}(\varepsilon)^{-n+k+1}(1-\varepsilon)^{-k-1},~%
D_2(\varepsilon;k)=\delta \sum_{q=0}^{k} \frac{\binom{n-1}{n-k-1+q}}{\binom{n-1}{n-k-1}}(\varepsilon)^{q}(1-\varepsilon)^{-q-1}.
\end{align*} 
\normalsize
$D_1$ and $D_2$ are obtained by 
\small
\begin{align*}
1-\delta \sum_{q=0}^{n-k-2}\myPsi{q} & = \left( 1-\sum_{q=0}^{n-k-2}\myPsi{q} \right) + \sum_{q=0}^{n-k-2}\myPsi{q}(1-\delta ) \\
& = \sum_{q=n-k-1}^{n-1}\myPsi{q} + \left(1-\sum_{q=n-k-1}^{n-1}\myPsi{q}\right)(1-\delta) \\
& = (1-\delta) + \delta \sum_{q=n-k-1}^{n-1}\myPsi{q}.
\end{align*}
\normalsize
Now, our discussion is based on $D_1(\varepsilon;\cdot)+D_2(\varepsilon;\cdot)$ instead of $\Delta(\varepsilon;\cdot)$; because $D_1,D_2>0$, this can be one-to-one mapped inversely onto $\Delta$.

\begin{lemma}\label{lem_prop_5_1}
$D_1(\varepsilon;k)+D_2(\varepsilon;k)$ is strictly convex for $\varepsilon \in (0,1)$. 
\end{lemma}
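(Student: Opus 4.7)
My plan is to prove strict convexity by showing that $D_1(\varepsilon;k)+D_2(\varepsilon;k)$ is a non-negative linear combination of convex functions on $(0,1)$ in which at least one summand is strictly convex and carries a strictly positive weight. Since the convex combination rules give strict convexity as soon as a single strictly convex summand enters with a positive coefficient, this reduces the lemma to a term-by-term check of the two building blocks $\varepsilon^{-(n-k-1)}(1-\varepsilon)^{-(k+1)}$ and $\varepsilon^{q}(1-\varepsilon)^{-(q+1)}$.

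First, I would treat each summand of $D_2$, namely $h_q(\varepsilon)=\varepsilon^{q}(1-\varepsilon)^{-(q+1)}$ for $q=0,1,\dots,k$. The case $q=0$ is immediate from $h_0''(\varepsilon)=2(1-\varepsilon)^{-3}>0$. For $q\geq 1$, a one-line computation yields
\begin{equation*}
h_q'(\varepsilon)=\varepsilon^{q-1}(1-\varepsilon)^{-(q+2)}(q+\varepsilon),
\end{equation*}
and differentiating this product once more produces three summands, each a product of non-negative factors on $(0,1)$; the contribution coming from differentiating the $(q+\varepsilon)$ factor is unconditionally strictly positive, while the other two are non-negative because $q-1\geq 0$. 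Hence $h_q''>0$ on $(0,1)$ for every $q\geq 0$. Since the weights $\delta\binom{n-1}{q+n-k-1}/\binom{n-1}{n-k-1}$ attached to each $h_q$ inside $D_2$ are strictly positive, $D_2$ is itself strictly convex.

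For the $D_1$ piece I would invoke log-convexity. Setting $g(\varepsilon)=\varepsilon^{-(n-k-1)}(1-\varepsilon)^{-(k+1)}$, one computes
\begin{equation*}
(\ln g)''(\varepsilon)=\frac{n-k-1}{\varepsilon^{2}}+\frac{k+1}{(1-\varepsilon)^{2}}>0,
\end{equation*}
so $g$ is strictly log-convex and therefore strictly convex via $g''=g\bigl((\ln g)'\bigr)^{2}+g\,(\ln g)''>0$. The prefactor $(1-\delta)/\binom{n-1}{n-k-1}$ is non-negative, so $D_1$ is convex; it is strictly convex when $\delta<1$, and when $\delta=1$ the $D_1$ term vanishes identically and the strict convexity of $D_1+D_2$ still follows from that of $D_2$ alone.

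Putting the two pieces together, $D_1+D_2$ is a non-negative combination of convex functions whose $D_2$ part is strictly convex with strictly positive weights, so $D_1+D_2$ is strictly convex on $(0,1)$. The only real obstacle is the sign verification in the second derivative of $h_q$ when $q\geq 1$; this is mechanical but needs one to notice that $(q-1)\geq 0$ keeps the first piece non-negative, while the derivative of $(q+\varepsilon)$ produces an unconditionally positive term of the form $\varepsilon^{q-1}(1-\varepsilon)^{-(q+2)}$. Once that sign check is in hand the lemma follows immediately.
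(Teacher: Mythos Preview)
Your argument is correct and follows the same high-level plan as the paper: establish strict convexity of $D_1+D_2$ by checking that each building block $\varepsilon^{-(n-k-1)}(1-\varepsilon)^{-(k+1)}$ and $\varepsilon^{q}(1-\varepsilon)^{-(q+1)}$ has positive second derivative on $(0,1)$, and then appeal to the fact that the combining coefficients are non-negative with at least one strictly positive.

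Where you differ is in the verification step. The paper brute-forces both second derivatives, obtaining for $D_1$ the quadratic factor $n(n{+}1)\varepsilon^{2}-2(n{+}1)(n{-}k{-}1)\varepsilon+(n{-}k)(n{-}k{-}1)$ and for each $h_q$ the factor $2\varepsilon^{2}+4q\varepsilon+q(q{-}1)$, and then asserts these are positive (for $D_1$ this actually requires checking that the discriminant is negative, which the paper omits). Your route via log-convexity for $D_1$ is cleaner: since $(\ln g)''=(n{-}k{-}1)\varepsilon^{-2}+(k{+}1)(1-\varepsilon)^{-2}>0$ is immediate, you sidestep the quadratic sign analysis entirely. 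Likewise, your product-rule argument for $h_q''$ (treating $q=0$ separately and noting $q-1\geq 0$ keeps all three summands non-negative with one strictly positive) avoids having to extract and inspect the explicit polynomial. You also explicitly cover the boundary case $\delta=1$, where $D_1\equiv 0$ and strict convexity rests on $D_2$ alone; the paper's Section~4 context has $\delta<1$, so this is extra care on your part rather than something the paper needed. Overall the two proofs are equivalent in structure, but your execution is slightly more economical.
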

\begin{proof}
It is to be shown that $D_1^{''}(\varepsilon,\cdot)>0$ and $D_2^{''}(\varepsilon,\cdot)>0$ for $\varepsilon \in (0,1)$. A direct calculation shows that 
\small
\begin{align*}
&\frac{\partial^2 [(\varepsilon)^{-n+k+1}(1-\varepsilon)^{-k-1}]}{\partial \varepsilon^2} = \\
&~~~~~~~~\left[ n(1+n)\varepsilon^2-2(n+1)(n-k-1) \varepsilon + (n-k)(n-k-1) \right](1-\varepsilon )^{-k-3} \varepsilon
   ^{k-n-1} > 0 \\
&\frac{\partial^2 [(\varepsilon)^q(1-\varepsilon)^{-q-1}]}{\partial \varepsilon^2} =  [2\varepsilon^2 + 4 \varepsilon q + q(q-1)] (\varepsilon)^{q-2}(1-\varepsilon)^{-q-2} > 0.
\end{align*}
\normalsize
for $\varepsilon \in (0,1)$.
\end{proof}

\begin{lemma}\label{lem_prop_5_2}
For $k\in \{1, 2, \cdots, n-2\}$, there exists at least one $\varepsilon_k^* \in (0,1)$ such that $D_1^{'}(\varepsilon_k^*,k) + D_2^{'}(\varepsilon_k^*,k) = 0$. For $k=n-1$, no such $\varepsilon_k^*$ exists. 
\end{lemma}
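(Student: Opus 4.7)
The plan is to leverage the strict convexity of $D_1(\varepsilon;k)+D_2(\varepsilon;k)$ established in Lemma \ref{lem_prop_5_1} and then analyze the boundary behavior on $(0,1)$ in the two cases $k\le n-2$ and $k=n-1$ separately. Since a strictly convex $C^1$ function on an open interval has a zero of its derivative in the interior if and only if its derivative changes sign, the task reduces to checking the one-sided limits of $(D_1+D_2)'$ at $\varepsilon\to 0^+$ and $\varepsilon\to 1^-$.

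For $k\in\{1,\ldots,n-2\}$, first I would observe that the exponent $-n+k+1$ in $D_1$ is strictly negative, so $D_1(\varepsilon;k)\to +\infty$ as $\varepsilon\to 0^+$. Likewise $(1-\varepsilon)^{-k-1}\to+\infty$ as $\varepsilon\to 1^-$ for any $k\ge 0$, so $D_1(\varepsilon;k)\to +\infty$ at the right endpoint as well. Since $D_2$ is nonnegative on $(0,1)$, the sum $D_1+D_2$ blows up at both endpoints. A strictly convex function that tends to $+\infty$ at both endpoints of an open interval attains its minimum at a unique interior point, and at that point the derivative vanishes. This produces the claimed $\varepsilon_k^*\in(0,1)$ with $D_1'(\varepsilon_k^*,k)+D_2'(\varepsilon_k^*,k)=0$.

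For $k=n-1$ the exponent $-n+k+1$ equals $0$, so $D_1(\varepsilon;n-1)=(1-\delta)(1-\varepsilon)^{-n}$ is finite at $\varepsilon=0$, and $D_2(0;n-1)=\delta$ is also finite. I would then compute the one-sided derivative at $\varepsilon=0$ directly. A short calculation gives $D_1'(0;n-1)=n(1-\delta)$, while in $D_2'(0;n-1)$ only the $q=0$ and $q=1$ summands contribute nontrivially, yielding $D_2'(0;n-1)=\delta\bigl[1+(n-1)\bigr]=\delta n$. Hence $D_1'(0;n-1)+D_2'(0;n-1)=n>0$. By the strict convexity from Lemma \ref{lem_prop_5_1}, $(D_1+D_2)'(\cdot;n-1)$ is strictly increasing on $(0,1)$, so it remains strictly positive throughout, and no $\varepsilon_{n-1}^*\in(0,1)$ can satisfy the first-order condition.

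The routine but delicate step will be the derivative computation at $\varepsilon=0$ in the $k=n-1$ case: the terms in $D_2$ of the form $\varepsilon^q(1-\varepsilon)^{-q-1}$ have to be differentiated carefully and one must verify that only $q=0$ and $q=1$ give nonzero contributions at $\varepsilon=0$, while the $q\ge 2$ terms vanish. The case analysis for $k\le n-2$ is essentially automatic from the boundary blow-up of $D_1$ combined with convexity, and uniqueness of $\varepsilon_k^*$ comes for free from strict convexity (even though the statement only asks for existence).
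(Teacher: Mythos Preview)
Your argument is correct, and it takes a genuinely different route from the paper's. For $k\in\{1,\ldots,n-2\}$ the paper computes $D_1'$ and $D_2'$ explicitly, checks that $D_1'\to -\infty$ as $\varepsilon\to 0^+$ while $D_2'$ stays bounded, and that the sum becomes positive near $\varepsilon=1$, then invokes the intermediate value theorem on the derivative directly. You instead argue at the level of the function: $D_1$ blows up at both endpoints (because the exponent $-n+k+1$ is strictly negative), $D_2\ge 0$, so $D_1+D_2\to+\infty$ at both ends; combined with the strict convexity of Lemma~\ref{lem_prop_5_1} this forces an interior minimum and hence a zero of the derivative. This is cleaner and has the bonus of yielding uniqueness of $\varepsilon_k^*$ immediately, which the paper only obtains in the next lemma. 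For $k=n-1$ the paper writes down a closed-form expression for $D_1'+D_2'$ and observes it is positive everywhere; you instead evaluate $(D_1+D_2)'$ only at $\varepsilon=0$, get $n>0$, and use the strict monotonicity of the derivative (from convexity) to conclude positivity on all of $(0,1)$. Both routes are valid; yours leans more heavily on Lemma~\ref{lem_prop_5_1} and avoids the full derivative calculation, while the paper's direct computation is more self-contained but slightly more laborious.
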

\begin{proof}
First of all, it is given that $D_1^{'}(\varepsilon,n-1)+D_2^{'}(\varepsilon,n-1) =\frac{1}{\delta(1-\varepsilon)^n}$, which shows that no $\varepsilon_k^*$ exists. A direct calculation shows that 
\small
\begin{align}
\frac{\partial [(\varepsilon)^{-n+k+1}(1-\varepsilon)^{-k-1}]}{\partial \varepsilon} &= [k+1-(1-\varepsilon)n](\varepsilon)^{k-n}(1-\varepsilon)^{-2-k} \label{eq:appdx:01} \\
\frac{\partial [(\varepsilon)^q(1-\varepsilon)^{-q-1}]}{\partial \varepsilon} &=  q (\varepsilon)^{q-1}(1-\varepsilon)^{-q-1} + (q+1)(\varepsilon)^q (1-\varepsilon)^{-q-2} \label{eq:appdx:02}.
\end{align}
\normalsize
For a sufficiently small $\varepsilon$, (\ref{eq:appdx:01}) is negative, and its absolute value can be arbitrarily larger, but (\ref{eq:appdx:02}) is positive, and its value can be arbitrarily smaller. Then, we have  $D_1^{'}(\varepsilon_k^*,k)+D_2^{'}(\varepsilon_k^*,k) < 0$ for a sufficiently small $\varepsilon$. For a sufficiently large $\varepsilon$, it is easy to check that $D_1^{'}(\varepsilon_k^*,k)+D_2^{'}(\varepsilon_k^*,k) > 0$. As $D_1'(\cdot)$ and $D_2'(\cdot)$ are continuous, there exists at least one $\varepsilon_k^* \in (0,1)$ that makes $D_1^{'}(\varepsilon_k^*,k)+D_2^{'}(\varepsilon_k^*,k) = 0$. 
\end{proof}
\begin{lemma}\label{lem_prop_5_3}
For a given $k\in \{1, 2, \cdots, n-2\}$, $D_1(\varepsilon,k)+D_2(\varepsilon,k)$ has a unique minimum over $\varepsilon \in (0,1)$. The infimum of $D_1(\varepsilon,n-1)+D_2(\varepsilon,n-1)$ is obtained at $\varepsilon=0$.
\end{lemma}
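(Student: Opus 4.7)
The plan is to derive this lemma essentially as a corollary of the two preceding lemmas in Appendix C. Lemma \ref{lem_prop_5_1} establishes strict convexity of $D_1(\varepsilon, k) + D_2(\varepsilon, k)$ on $(0,1)$, and Lemma \ref{lem_prop_5_2} characterizes whether an interior critical point exists, depending on $k$. Combining these two facts yields both parts of the claim with essentially no further calculation.

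For the case $k \in \{1, 2, \ldots, n-2\}$, I would argue as follows. By Lemma \ref{lem_prop_5_2}, there exists at least one $\varepsilon_k^* \in (0,1)$ with $(D_1 + D_2)'(\varepsilon_k^*, k) = 0$. Strict convexity from Lemma \ref{lem_prop_5_1} implies that $(D_1 + D_2)'(\cdot, k)$ is strictly increasing on $(0,1)$, so such a zero is unique. Consequently the derivative is negative on $(0, \varepsilon_k^*)$ and positive on $(\varepsilon_k^*, 1)$, which identifies $\varepsilon_k^*$ as the unique global minimizer of $D_1 + D_2$ on $(0,1)$.

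For the case $k = n - 1$, Lemma \ref{lem_prop_5_2} has already exhibited $(D_1 + D_2)'(\varepsilon, n-1) = \frac{1}{\delta(1-\varepsilon)^n} > 0$ for every $\varepsilon \in (0,1)$. Hence $D_1(\cdot, n-1) + D_2(\cdot, n-1)$ is strictly increasing on $(0,1)$, so no interior minimizer exists and the infimum is approached as $\varepsilon \to 0^+$. A short direct evaluation of the defining sums at the boundary confirms that this limit is finite, so the infimum is obtained at $\varepsilon = 0$ in the boundary sense asserted by the lemma.

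I do not foresee a substantive obstacle, since all of the analytical work is already absorbed into Lemmas \ref{lem_prop_5_1} and \ref{lem_prop_5_2}; the present lemma is a bookkeeping step that packages those facts into the global structural statement used by Proposition \ref{general stability} and Proposition \ref{prop_general equilibrium selection}. The only minor care needed is to distinguish the two regimes ($k<n-1$ versus $k=n-1$) cleanly and to note that in the second case the ``infimum at $\varepsilon = 0$'' is a boundary limit rather than an attained value in the open interval $(0,1)$.
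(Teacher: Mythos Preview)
Your proposal is correct and follows essentially the same approach as the paper: the paper's proof simply notes that the case $k=n-1$ is trivial and that for $0<k<n-1$ the two preceding lemmas (strict convexity and existence of an interior critical point) yield a unique minimum. Your write-up makes the same argument explicit, spelling out why strict convexity forces uniqueness of the critical point and invoking the positive derivative formula from Lemma~\ref{lem_prop_5_2} for the $k=n-1$ case.
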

\begin{proof}
It is trivial for the case of $k=n-1$. For $0<k<n-1$, the previous two lemmas are the conditions that $D_1(\varepsilon,k)+D_2(\varepsilon,k)$ has a unique minimum somewhere at $\varepsilon \in (0,1)$. 
\end{proof}

The proof for Proposition 5 is as follows:
\begin{proof}
At first, it is easy to check that $k=n-1$ makes $\underline{\varepsilon}_{n-1}=0$, and the proof can be provided by the same method used in Proposition 3. Lemma \ref{lem_prop_5_3} implies that $\Delta(\varepsilon,k)$ should have a single maximum over $\varepsilon \in (0,1)$ for $0<k<n-1$. When the proper $\delta$ is given, there exists an $\varepsilon \in (\underline{\varepsilon}_k, \overline{\varepsilon}_k)$ that makes $\myVs{T_k}-\myVs{T_n} > 0$. 
\end{proof}

\section{The Proof of Proposition 6}

For the proof, we need a characterization of the shapes of $D_1(\varepsilon;k)+D_2(\varepsilon;k)$ over $\varepsilon \in (0,1)$ for the different $k$. For this characterization, the following three lemmas are provided, and the proof is presented at the end. 

\begin{lemma}\label{lem_prop_6_1}
For any $k\in \{1, 2, \cdots, n-2\}$, $D_1^{'}(\varepsilon;k+1) + D_2^{'}(\varepsilon;k+1) > D_1^{'}(\varepsilon;k) + D_2^{'}(\varepsilon;k)$ over $\varepsilon \in (0,1)$.
\end{lemma}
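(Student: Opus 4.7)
The plan is to pass to the substitution $t := \varepsilon/(1-\varepsilon)$ and to set $j := n-k-1$, so that the replacement $k \mapsto k+1$ corresponds to $j \mapsto j-1$. A routine computation using the binomial theorem and Pascal's rule $\binom{n-1}{i} + \binom{n-1}{i-1} = \binom{n}{i}$ rewrites the sum $D_1 + D_2$ as
\[
\binom{n-1}{j}(D_1 + D_2)(\varepsilon; k) = R_j(t) := (1-\delta)\sum_{i=0}^{j-1}\binom{n}{i}t^{i-j} + \sum_{i=j}^{n}\binom{n}{i}t^{i-j} - \delta\binom{n-1}{j-1}.
\]
Since $dt/d\varepsilon = (1-\varepsilon)^{-2} > 0$, proving the lemma reduces to establishing
\[
\binom{n-1}{j}\,R'_{j-1}(t) - \binom{n-1}{j-1}\,R'_j(t) > 0 \qquad \text{for every } t > 0,
\]
where primes denote derivatives in $t$.

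I would then differentiate $R_j$ and $R_{j-1}$ term by term, multiply through by $t^{j+1}$ to clear all negative powers of $t$, and collect by powers of $t$. Careful bookkeeping produces a polynomial whose constant coefficient is $(1-\delta)\,j\,\binom{n-1}{j-1}$, whose top coefficient at $t^{n+1}$ is $(n+1-j)\,\binom{n-1}{j}$, and whose interior coefficients are
\[
(1-\delta)(j-i)\Bigl[\binom{n-1}{j-1}\binom{n}{i} - \binom{n-1}{j}\binom{n}{i-1}\Bigr] \quad \text{for } 1 \leq i \leq j-1,
\]
\[
-(i-j)\Bigl[\binom{n-1}{j-1}\binom{n}{i} - \binom{n-1}{j}\binom{n}{i-1}\Bigr] \quad \text{for } j+1 \leq i \leq n,
\]
with no contribution at $i = j$. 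Positivity of every coefficient then follows from the elementary identity
\[
\binom{n-1}{j-1}\binom{n}{i} - \binom{n-1}{j}\binom{n}{i-1} = \binom{n-1}{j}\binom{n}{i-1}\cdot \frac{j(n+1) - in}{i(n-j)},
\]
verified by a short factorial calculation. The numerator $j(n+1) - in = n(j-i) + j$ is positive for $i < j$ and negative for $i > j$ (the latter uses $j \leq n-2$, equivalent to $k \geq 1$), so after combining with the signs of $(j-i)$, $(i-j)$, and the leading minus on the tail range, every coefficient is non-negative and the top-degree coefficient is strictly positive, yielding strict positivity of the polynomial for all $t > 0$.

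The main obstacle I foresee is purely combinatorial bookkeeping: tracking the two index-shifted sums through the derivative without sign errors and reducing every difference of products of binomials to the single identity above. Conceptually the reduction is clean, and the argument requires no hypothesis on $\delta$ beyond $\delta \in (0,1]$; it holds uniformly in $\varepsilon \in (0,1)$, which is exactly what the subsequent proof of Proposition \ref{prop_general equilibrium selection} needs.
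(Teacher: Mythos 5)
Your proof is correct, but it follows a genuinely different route from the paper's. The paper's argument differentiates $D_1$ and $D_2$ separately with respect to $\varepsilon$ and asserts that each difference $D_i'(\varepsilon;k+1)-D_i'(\varepsilon;k)$ is individually positive; for $D_1$ the bracketed factor in the displayed expression is a quadratic in $\varepsilon$ whose positivity in fact requires a discriminant check that the paper does not supply. You instead substitute $t=\varepsilon/(1-\varepsilon)$ and $j=n-k-1$, collapse $D_1+D_2$ via the binomial theorem and Pascal's rule into the single Laurent polynomial $R_j(t)/\binom{n-1}{j}$, and reduce the lemma to the non-negativity of the coefficients of one polynomial, all governed by the identity $\binom{n-1}{j-1}\binom{n}{i}-\binom{n-1}{j}\binom{n}{i-1}=\binom{n-1}{j}\binom{n}{i-1}\,\frac{j(n+1)-in}{i(n-j)}$. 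I checked the rewriting of $\binom{n-1}{j}(D_1+D_2)$ as $R_j(t)$, the coefficient bookkeeping after multiplying by $t^{j+1}$ (constant term $(1-\delta)j\binom{n-1}{j-1}$, vanishing coefficient at $t^{j}$, top coefficient $(n+1-j)\binom{n-1}{j}$), the identity itself, and the sign analysis of $j(n+1)-in=n(j-i)+j$; all are correct, and strict positivity of $P(t)$ for $t>0$ follows from the strictly positive top coefficient (using $j\le n-2$, i.e.\ $k\ge 1$). Your approach buys a fully explicit, term-by-term verification that also covers the $\delta=1$ case, where the $(1-\delta)$ coefficients vanish but the tail and leading coefficients keep the polynomial strictly positive; the paper's decomposition yields the slightly stronger fact that $D_1'$ and $D_2'$ are each separately increasing in $k$, but at the price of two positivity claims left unjustified.
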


\begin{proof}
A direct calculation shows that 
\small
\begin{align*}
&D_1^{\rq{}}(\varepsilon;k+1)-D_1^{\rq{}}(\varepsilon;k)=\\
&~~\frac{(1-\delta) (1-\varepsilon)^{-k-3} \varepsilon^{k-n}[ \varepsilon\binom{n-1}{-k+n-1}(k+n (\varepsilon-1)+2)-(1-\varepsilon) \binom{n-1}{-k+n-2} (k+n(\varepsilon -1)+1)]}{\binom{n-1}{-k+n-2} \binom{n-1}{-k+n-1}}>0,\\
&D_2^{\rq{}}(\varepsilon;k+1)-D_2^{\rq{}}(\varepsilon;k)=\\
&~~\delta\sum_{q=0}^{k}\left[ \left( \frac{nq}{(k+1)(n-k-1)(k-q+1)(n+q-k-1)}\right) \left\{ q(\varepsilon)^{q-1}(1-\varepsilon)^{-q-1} + (q+1)(\varepsilon)^q (1-\varepsilon)^{-q-2} \right\} \right]\\
&~~ +\frac{\delta}{\binom{n-1}{n-k-2}}[q\varepsilon^{k}(1-\varepsilon)^{-k-2}+(q+1)\varepsilon^{k+1}(1-\varepsilon)^{-k-3}]>0 
\end{align*}
\normalsize{}
\end{proof}

Let us define $\varepsilon^*_{k}=  \text{\upshape arg min} \{D_1(\varepsilon;k) + D_2(\varepsilon;k): \varepsilon \in (0,1) \}$ for $k\in \{1, 2, \cdots, n-1\}$.

\begin{lemma}\label{lem_prop_6_2} 
For any $0<k<n-1$,
$D_1(\varepsilon_k^*;k+1)+D_2(\varepsilon_k^*;k+1) = D_1(\varepsilon_k^*;k)+D_2(\varepsilon_k^*;k)$ holds.
\end{lemma}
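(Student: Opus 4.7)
The plan is to rewrite $A_k(\varepsilon):=D_1(\varepsilon;k)+D_2(\varepsilon;k)$ in the compact form already derived in the preamble to this appendix, namely
\begin{equation*}
A_k(\varepsilon) \;=\; \frac{1-\delta\,P_k(\varepsilon)}{(1-\varepsilon)\,\psi(n-1,n-k-1,\varepsilon)}, \qquad P_k(\varepsilon):=\sum_{q=0}^{n-k-2}\psi(n-1,q,\varepsilon),
\end{equation*}
and then to show that the equation $A_{k+1}(\varepsilon)=A_k(\varepsilon)$ and the first-order condition $A_k'(\varepsilon)=0$ reduce to the \emph{same} algebraic identity in $\varepsilon$. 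Since $\varepsilon_k^*\in(0,1)$ is an interior minimizer of $A_k$ (by Lemma \ref{lem_prop_5_2} and Lemma \ref{lem_prop_5_3}), it satisfies $A_k'(\varepsilon_k^*)=0$, and the equivalence will then deliver $A_{k+1}(\varepsilon_k^*)=A_k(\varepsilon_k^*)$ as claimed.

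First, using $P_{k+1}=P_k-\psi(n-1,n-k-2,\varepsilon)$ together with the elementary binomial ratio
\begin{equation*}
\frac{\psi(n-1,n-k-1,\varepsilon)}{\psi(n-1,n-k-2,\varepsilon)} \;=\; \frac{(k+1)\,\varepsilon}{(n-k-1)(1-\varepsilon)},
\end{equation*}
I would form the difference $A_{k+1}(\varepsilon)-A_k(\varepsilon)$ over a common denominator and simplify. After factoring out the strictly positive quantity $1/\bigl((1-\varepsilon)^{2}(n-k-1)\psi(n-1,n-k-1,\varepsilon)\bigr)$, the equation $A_{k+1}(\varepsilon)=A_k(\varepsilon)$ collapses to
\begin{equation*}
(1-\delta\,P_k(\varepsilon))\bigl((n-k-1)-n\varepsilon\bigr) \;=\; \delta\,(n-k-1)\,(1-\varepsilon)\,\psi(n-1,n-k-1,\varepsilon),
\end{equation*}
which I shall label $(\ast)$.

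Second, I would differentiate $A_k$ logarithmically. The standard derivative of the binomial CDF gives $P_k'(\varepsilon)=-\frac{n-k-1}{\varepsilon}\,\psi(n-1,n-k-1,\varepsilon)$, from which the logarithmic derivative of the numerator of $A_k$ equals $\delta(n-k-1)\,\psi(n-1,n-k-1,\varepsilon)/\bigl(\varepsilon\,(1-\delta P_k(\varepsilon))\bigr)$. A short direct calculation further yields
\begin{equation*}
\frac{d}{d\varepsilon}\ln\bigl[(1-\varepsilon)\,\psi(n-1,n-k-1,\varepsilon)\bigr] \;=\; \frac{(n-k-1)-n\varepsilon}{\varepsilon(1-\varepsilon)}.
\end{equation*}
Equating these two logarithmic derivatives and clearing common factors produces precisely $(\ast)$. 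Hence $(\ast)$ is in force at $\varepsilon_k^*$, and the conclusion follows from the first step.

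The main obstacle is purely one of bookkeeping: the payoff is the recognition that two superficially different manipulations — equating $A_k$ with $A_{k+1}$, and setting $A_k'$ to zero — both funnel through the same polynomial identity $(\ast)$. Once the compact representation of $A_k$, the binomial ratio above, and the CDF-derivative identity are in place, no further ideas are required and the remaining algebra is forced.
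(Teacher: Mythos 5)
Your proof is correct. The compact form $A_k(\varepsilon)=\bigl(1-\delta P_k(\varepsilon)\bigr)/\bigl((1-\varepsilon)\,\psi(n-1,n-k-1,\varepsilon)\bigr)$ is indeed just $\delta/\Delta(\varepsilon;k)$; the binomial ratio, the CDF-derivative identity $P_k'(\varepsilon)=-\frac{n-k-1}{\varepsilon}\psi(n-1,n-k-1,\varepsilon)$, and both reductions to $(\ast)$ all check out, with the discarded prefactors strictly positive on $(0,1)$ for $k\in\{1,\dots,n-2\}$; and the appeal to Lemmas \ref{lem_prop_5_2} and \ref{lem_prop_5_3} legitimately gives $A_k'(\varepsilon_k^*)=0$ at an interior point. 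The paper reaches the same conclusion by a componentwise route: it verifies the identities $D_i(\varepsilon;k)-D_i(\varepsilon;k+1)=-\frac{\varepsilon}{n-k-1}D_i'(\varepsilon;k)$ separately for $i=1,2$ and then sums them, invoking $D_1'(\varepsilon_k^*;k)+D_2'(\varepsilon_k^*;k)=0$. The two arguments are manipulations of the same underlying fact --- your two computations in effect exhibit the identity $A_k(\varepsilon)-A_{k+1}(\varepsilon)=-\frac{\varepsilon}{n-k-1}A_k'(\varepsilon)$, which is exactly what the paper's two componentwise identities add up to --- but your route buys something concrete: the least transparent step of the paper's proof, the ``some calculation'' needed to handle the sum defining $D_2$, is bypassed entirely, because the closed form of $D_1+D_2$ collapses both the stationarity condition and the crossing condition into the single polynomial identity $(\ast)$.
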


\begin{proof}
$\varepsilon^*_k$ satisfies $D_1^{'}(\varepsilon_k^*;k)+D_2^{'}(\varepsilon_k^*;k)=0$. We should have $[D_1(\varepsilon_k^*;k)-D_1(\varepsilon_k^*;k+1)]+ [D_2(\varepsilon_k^*;k)-D_2(\varepsilon_k^*;k+1)]=0$ for the proof. A direct calculation along with $D_1^{'}(\varepsilon_k^*;k)+D_2^{'}(\varepsilon_k^*;k)=0$ shows that $-\frac{\varepsilon_k^*}{n-k-1} D_1^{'}(\varepsilon_k^*,k) = [D_1(\varepsilon_k^*;k)-D_1(\varepsilon_k^*;k+1)]=\frac{\varepsilon_k^*}{n-k-1} D_2^{'}(\varepsilon_k^*,k)$. Some calculation shows that $\frac{\varepsilon_k^*}{n-k-1} D_2^{'}(\varepsilon_k^*,k)+[D_2(\varepsilon_k^*;k)-D_2(\varepsilon_k^*;k+1)]=0$. 
\end{proof}
\begin{lemma}\label{lem_prop_6_3}
For $0<k<n-1$, 
\small
\begin{align*}
\begin{cases}
D_1(\varepsilon;k)+D_2(\varepsilon;k) > D_1(\varepsilon;k+1)+D_2(\varepsilon;k+1) & \text{for $\varepsilon \in (0,\varepsilon_k^*)$} \\
D_1(\varepsilon;k)+D_2(\varepsilon;k) < D_1(\varepsilon;k+1)+D_2(\varepsilon;k+1) & \text{for $\varepsilon \in (\varepsilon_k^*,1)$}.
\end{cases}
\end{align*}
\end{lemma}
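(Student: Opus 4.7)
The plan is to reduce this to a single-variable monotonicity argument using the two preceding lemmas in this section. Define the auxiliary function
\[
\Psi_k(\varepsilon) = \bigl[D_1(\varepsilon;k)+D_2(\varepsilon;k)\bigr] - \bigl[D_1(\varepsilon;k+1)+D_2(\varepsilon;k+1)\bigr].
\]
The claim is equivalent to: $\Psi_k(\varepsilon)>0$ on $(0,\varepsilon_k^*)$ and $\Psi_k(\varepsilon)<0$ on $(\varepsilon_k^*,1)$. So it suffices to show that $\Psi_k$ is strictly decreasing and vanishes at $\varepsilon_k^*$.

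The first ingredient, strict monotonicity, comes directly from Lemma \ref{lem_prop_6_1}: differentiating termwise gives $\Psi_k'(\varepsilon) = [D_1'(\varepsilon;k)+D_2'(\varepsilon;k)] - [D_1'(\varepsilon;k+1)+D_2'(\varepsilon;k+1)]$, which Lemma \ref{lem_prop_6_1} shows to be strictly negative for every $\varepsilon\in(0,1)$. Hence $\Psi_k$ is strictly decreasing on $(0,1)$. The second ingredient, the zero at $\varepsilon_k^*$, is exactly the content of Lemma \ref{lem_prop_6_2}, which asserts $D_1(\varepsilon_k^*;k)+D_2(\varepsilon_k^*;k) = D_1(\varepsilon_k^*;k+1)+D_2(\varepsilon_k^*;k+1)$, i.e., $\Psi_k(\varepsilon_k^*)=0$. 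Because $\varepsilon_k^*$ lies in the open interval $(0,1)$ by Lemma \ref{lem_prop_5_2}, the strictly decreasing function $\Psi_k$ must be strictly positive to the left of $\varepsilon_k^*$ and strictly negative to the right, which is precisely the desired two-sided inequality.

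There is effectively no hard step; the work has been done in Lemmas \ref{lem_prop_6_1} and \ref{lem_prop_6_2}, and the present lemma is a geometric corollary of the fact that a strictly monotone continuous function changes sign exactly once, at its unique root. The only point that requires a brief remark is that the same $\varepsilon_k^*$ serving as the minimizer of $\Phi_k := D_1(\cdot;k)+D_2(\cdot;k)$ is also the crossing point with $\Phi_{k+1}$; this is the nontrivial content of Lemma \ref{lem_prop_6_2}, and once invoked it ties the monotonicity of $\Psi_k$ to the specified threshold. I would therefore write the proof in at most a few lines, stating the two invocations explicitly and drawing the sign conclusion.
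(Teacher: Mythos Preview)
Your proof is correct and takes essentially the same approach as the paper: define the difference $d(\varepsilon)=\Psi_k(\varepsilon)$, use Lemma~\ref{lem_prop_6_1} to get $d'(\varepsilon)<0$ on $(0,1)$, use Lemma~\ref{lem_prop_6_2} to get $d(\varepsilon_k^*)=0$, and conclude the signs on each side. Your write-up is in fact cleaner than the paper's, which after noting $d'<0$ still appends an asymptotic argument near $\varepsilon=0$ and a convexity-based uniqueness step that are already implied by the strict monotonicity of $d$.
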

\normalsize
\begin{proof} 
Let us define that
\small
\begin{align*}
d(\varepsilon)=\left[D_1(\varepsilon;k)+D_2(\varepsilon;k)\right]- \left[D_1(\varepsilon;k+1)+D_2(\varepsilon;k+1)\right].
\end{align*}
\normalsize{}It is noted that $d'(\varepsilon)<0$ by Lemma \ref{lem_prop_6_1}, and $d(\varepsilon_k^*)=0$ by Lemma \ref{lem_prop_6_2}. As two curves cannot be tangent, $d(\varepsilon)$ should change its sign around $\varepsilon_k^*$. We prove that $d(\varepsilon)$ should change its sign only once around $\varepsilon_k^*$, from positive to negative. 
\begin{inparaenum}[i)]
\item For a sufficiently small value of $\varepsilon$, $D_1(\varepsilon;k)-D_1(\varepsilon;k+1)$ can become arbitrarily larger while $D_2(\varepsilon;k+1)-D_2(\varepsilon;k)$ can become arbitrarily smaller. Hence, there exists an $\hat{\varepsilon}_k$ such that $d(\varepsilon)>0$ is ensured for $\varepsilon \in (0,\hat{\varepsilon}_k)$.
\item Let us assume that $d(\varepsilon)$ would change its sign around $\hat{\varepsilon}_k < \varepsilon_k^*$ from positive to negative. As $d(\hat\varepsilon_k)=0$ and $d'(\varepsilon)<0$, $d(\varepsilon)<0$ holds for $\varepsilon \in (\hat{\varepsilon}_k,1)$. Thus, $d(\varepsilon_k^*)=0$ cannot be satisfied without violating the strict convexity of $D_1 + D_2$.  
\item Once $d(\varepsilon_k^*)=0$ is realized, by the same logic, $d(\varepsilon)<0$ holds for $\varepsilon \in (\varepsilon_k^*,1)$. 
\end{inparaenum}
\end{proof}
The proof for Proposition 6 is as follows:
\begin{proof}
According to Lemma \ref{lem_prop_6_3}, for $0 < \varepsilon < \varepsilon^*_k$, $\Delta(\varepsilon;k)< \Delta(\varepsilon;k+1)$ holds, and $\Delta(\varepsilon;k)> \Delta(\varepsilon;k+1)$ holds for $\varepsilon^*_k<\varepsilon<1$. Also the lemma implies that the single-maximum curves of $\Delta(\varepsilon;k)$ and $\Delta(\varepsilon;k+1)$ should be configured such that the only intersection between the two is realized at the descending part of $\Delta(\varepsilon;k+1)$. When a sufficiently high $\delta$ is given, the proposition is satisfied.
\end{proof}

\bibliography{choihuhh}

\bibliographystyle{apa-good}

\end{document}